\newcommand{\racts}{\mathbin{\rotatebox[origin=c]{180}{$\righttoleftarrow$}}}
\newcommand{\lacts}{\mathbin{\rotatebox[origin=c]{0}{$\righttoleftarrow$}}}
\theoremstyle{plain}
\newtheorem{theorem}{Theorem}[section]
\newtheorem{proposition}[theorem]{Proposition}
\newtheorem{lemma}[theorem]{Lemma}
\theoremstyle{definition}
\newtheorem{definition}[theorem]{Definition}
\theoremstyle{remark}
\newtheorem{remark}[theorem]{Remark} 
\newtheorem{example}[theorem]{Example}
\numberwithin{equation}{section}
\numberwithin{figure}{section}
\numberwithin{table}{section}
\newcommand{\R}{\mathbb{R}}
\newcommand{\N}{\mathbb{N}}
\newcommand{\C}{\mathbb{C}}                           
\newcommand{\Z}{\mathbb{Z}}
\newcommand{\s}[1]{\CMcal{#1}}
\newcommand{\bb}[1]{\mathscr{#1}}
\newcommand{\rr}[1]{\mathfrak{#1}}
\newcommand{\n}[1]{\mathbb{#1}}
\newcommand{\expo}[1]{\,\mathrm{e}^{#1}\,}
\newcommand{ \ii}{\,\mathrm{i}\,}
\newcommand{\virg}[1]{\lq\lq#1\rq\rq}                \newcommand{\ie}{\textsl{i.\,e.\,}}
\newcommand{\eg}{\textsl{e.\,g.\,}}
\newcommand{\cf}{\textsl{cf}.\,}
\begin{document}

\title[]{A new light on the FKMM invariant and its consequences}

\author[G. De~Nittis]{Giuseppe De Nittis}

\address[G. De~Nittis]{Facultad de Matem\'aticas \& Instituto de F\'{\i}sica,
  Pontificia Universidad Cat\'olica de Chile,
  Santiago, Chile.}
\email{gidenittis@mat.uc.cl}

\author[K. Gomi]{Kyonori Gomi}

\address[K. Gomi]{Department of Mathematics, Tokyo Institute of Technology,
2-12-1 Ookayama, Meguro-ku, Tokyo, 152-8551, Japan.}
\email{kgomi@math.titech.ac.jp}

\vspace{2mm}

\date{\today}

\begin{abstract}
``Quaternionic'' vector bundles are the objects which describe the topological phases of quantum systems subjected to an odd time-reversal symmetry (class AII). In this work we prove that the   FKMM invariant
provides the correct fundamental characteristic class for the classification of ``Quaternionic'' vector bundles in dimension less than, or equal to three (low dimension). The new insight is provided by the interpretation of the FKMM invariant
  from the viewpoint of the Bredon equivariant cohomology. This fact, along with  basic results in equivariant homotopy theory, allows us to 
  achieve the expected result.
  
\medskip

\noindent
{\bf MSC 2010}:
Primary: 	14D21;
Secondary:  	57R22, 	55N25, 81Q99.\\
\noindent
{\bf Keywords}:
{\it Class AII topological insulators, \virg{Quaternionic} vector bundles, FKMM invariant, Bredon equivariant cohomology.}

\end{abstract}

\maketitle

\tableofcontents

\section{Introduction}\label{sect:intro}

In its simplest incarnation, a \emph{Topological Quantum System} (TQS) is a continuous matrix-valued map 
\begin{equation}\label{eq:intro_tqs0}
X\;\ni\;x\;\longmapsto\; H(x)\;\in\; \rm{Mat}_N(\C)
\end{equation}
defined on a $d$-dimensional \emph{nice}\footnote{In this work we will assume that $X$ is a topological space with the homotopy type of a finite CW complex. The \emph{dimension} $d$ of $X$ is, by definition, the maximal dimension of its cells. We will say that $X$ is low dimensional if $0\leqslant d\leqslant 3$} topological space $X$.
 Although a precise definition of TQS requires some more ingredients (see \eg \cite{denittis-gomi-14,denittis-gomi-14-gen,denittis-gomi-18-I,denittis-gomi-18-II,denittis-gomi-18-III,denittis-gomi-22}), 
one can certainly
state that
the most relevant feature of these systems is the nature of the spectrum which is made by $N$ continuous \virg{energy} bands (taking into account possible degeneracies).
It is exactly this peculiar band structure, along with the structure of the related eigenspaces, which may encode  \emph{topological} information. 
In a nutshell, assume that
one can select $m<N$ bands that do not cross the other $N-m$ bands. Then, it is possible to construct a continuous projection-valued map  $X\ni x\mapsto P(x)\in \rm{Mat}_N(\C)$ such that $P(x)$ is the rank $m$ spectral projection of $H(x)$ associated to the spectral subspace selected by the $m$ energy bands at the point $x$.
Due to the classical Serre-Swan construction \cite{serre-55,swan-62} one can associate  to $x\mapsto P(x)$ a unique (up to isomorphisms) rank $m$ complex vector bundle $\bb{E}\to X$  called the \emph{spectral bundle} (see \eg \cite[Section 2]{denittis-gomi-14}). The remarkable consequence of this construction is  the following principle:

\medskip

 \emph{One can classify the topological phases of a TQS by   the  elements of the set ${\rm Vec}_{\C}^{m}(X)$ of isomorphism classes of rank  $m$ complex vector bundles over $X$}. 
 
 \medskip
 
 Therefore, the problem of the enumeration of the  topological phases of a TQS  can be  converted into the classical problem in topology of the classification of 
${\rm Vec}_{\C}^{m}(X)$. The important result due to F. P. Peterson
\cite{peterson-59}  establishes that  this classification can be achieved  by using the Chern classes which take values in the cohomology groups $H^{2k}(X,\Z)$.
In particular,  in  \emph{low dimension} 
the classification is completely specified by the first Chern class $c_1$, \ie
\begin{equation}\label{eq:intro_tqs3}
c_1\;:\;{\rm Vec}_{\C}^{m}(X)\;\stackrel{\simeq}{\longrightarrow}\;H^2\big(X,\Z\big)\;,\qquad\quad\forall\ m\in\N \quad\text{if} \quad d\leqslant 3\;.
\end{equation}

\medskip

TQS of type \eqref{eq:intro_tqs0}
 are ubiquitous in mathematical physics \cite{bohm-mostafazadeh-koizumi-niu-zwanziger-03,chruscinski-jamiolkowski-04}. 
 They can be used to model
systems subjected to  \emph{cyclic adiabatic processes} in  classical and quantum mechanics \cite{pancharatnam-56,berry-84}, or in the description of the \emph{magnetic monopole} \cite{dirac-31, yang-96}
and  the \emph{Aharonov-Bohm effect} \cite{aharonov-bohm-59}, 
or in the molecular dynamics in the context of the  \emph{Born-Oppenheimer approximation} \cite{baer-06,faure-zhilinskii-01,gat-robbins-15},
 just to mention some important example.

  \medskip
 
A very important example of  TQS comes from the Condensed Matter Physics
and concerns the dynamics of (independent) electrons in a periodic background (a crystal). In this case the Bloch-Floquet formalism \cite{ashcroft-mermin-76,kuchment-93} allows us to decompose the Schr\"odinger operator in a parametric family of operators like in \eqref{eq:intro_tqs0} labelled by the points of a torus $X=\n{T}^d$ ($d=1,2,3$),
called \emph{Brillouin zone}. In this particular case the classification of the topological phases  is completely specified by $H^2(\n{T}^d,\Z)$ due to \eqref{eq:intro_tqs3}, and the different topological phases are interpreted as the distinct \emph{quantized} values of the Hall conductance by means of the celebrated \emph{Kubo-Chern formula} \cite{thouless-kohmoto-nightingale-nijs-82,bellissard-elst-schulz-baldes-94}. The last result is the core of the theoretical explanation
of the \emph{Quantum Hall Effect} which is the prototypical example of topological insulating phases. Nowadays the study 
of topologically protected phases of \emph{topological insulators} is a mainstream topic in Condensed Matter Physics (see the  reviews  \cite{hasan-kane-10} and \cite{ando-fu-15} for a vast  overview on the subject).

\medskip

The problem of the classification of the topological phases becomes more interesting, and challenging, when the TQS is constrained by  certain \emph{(pseudo-)symmetries} like the {time-reversal symmetry} (TRS). A system like \eqref{eq:intro_tqs0} is said to be time-reversal symmetric if there is an  \emph{involution} $\tau:X\to X$ on the base space and an \emph{anti}-unitary map $\Theta$ such that
\begin{equation}\label{eq:intro_tqs3bis}
\left\{
\begin{aligned}
\Theta\;H(x)\;\Theta^*\;&=\; H\big(\tau(x)\big)\;,&\quad\qquad \forall\ x\in\ X\;\\
\Theta^2\;&=\;\epsilon\;{\bf 1}_{N}& \epsilon=\pm1\;
\end{aligned}
\right.
\end{equation}
where ${\bf 1}_{N}$ denotes the $N\times N$ identity matrix.
Let us point out that in the  definition above a crucial role is played by the pair $(X,\tau)$ which is called \emph{involutive space}\footnote{
In this case we will assume that $(X,\tau)$ has the equivariant homotopy type of a finite $\Z_2$-CW complex in the sense explained in Section \ref{sec_eq_cw}.}. Its fixed point set will be denoted by $X^\tau:=\{x\in X\;|\; \tau(x)=x\}$.

\medskip

The case $\epsilon=+1$ corresponds to an \emph{even}  TRS. In this case, the spectral 
 bundle $\bb{E}$ turns out to be equipped with an additional structure  \cite[Section 2]{denittis-gomi-14}, which makes it  a  \emph{\virg{Real} vector bundle} in the sense of Atiyah \cite{atiyah-66}. Therefore, in the presence of an even TRS the  classification problem of the topological phases 
of a TQS is reduced to the study of the set  
${\rm Vec}_{\rr{R}}^{m}(X,\tau)$ of equivalence classes of rank $m$ \virg{Real} vector bundles over the involutive space $(X,\tau)$. Also in this case,  there is a complete classification in low-dimension given by 
\begin{equation}\label{eq:intro_tqs4}
c_1^{\rr{R}}\;:\;{\rm Vec}_{\rr{R}}^{m}(X,\tau)\;\stackrel{\simeq}{\longrightarrow}\;H^2_{\Z_2}(X,\Z(1))\;,\qquad\quad\forall\ m\in\N \quad\text{if} \quad d\leqslant 3\;,
\end{equation}
which provides the generalization of \eqref{eq:intro_tqs3}.
The  cohomology appearing in \eqref{eq:intro_tqs4} is the \emph{equivariant Borel cohomology} of the involutive space $(X,\tau)$
with local coefficient system $\Z(1)$  (see Appendix \ref{subsec:borel_cohom} and references therein for more details).
The isomorphism \eqref{eq:intro_tqs4}, called Kahn's isomorphism, is induced by the \virg{Real} Chern classes $c_j^{\rr{R}}$ as defined in  \cite{kahn-59}. Its proof follows from
\cite[Proposition 1]{kahn-59}
(see also  \cite[Corollary A.5]{gomi-15}) along with  the stable condition for \virg{Real} vector bundles  \cite[Theorem 4.25]{denittis-gomi-14}.

\medskip

The case $\epsilon=-1$ describes an \emph{odd}  TRS. Also in this situation the spectral 
vector bundle $\bb{E}$ acquires an additional structure which converts $\bb{E}$ in a  \emph{\virg{Quaternionic} vector bundle} in the sense of Dupont \cite{dupont-69}.
Therefore, the topological phases of a TQS with an odd TRS are labelled  by the set ${\rm Vec}_{\rr{Q}}^{m}(X,\tau)$ 
of equivalence classes of   \virg{Quaternionic} vector bundles over  $(X,\tau)$.
The study of   systems with an odd TRS is generally more interesting, and usually harder, than the even case. Historically, the fame of these \virg{fermionic-type} systems is related with the seminal papers \cite{kane-mele-05,fu-kane-mele-95}  where the \emph{Quantum Spin Hall Effect} is interpreted as the manifestation of a non-trivial topology for TQS constrained by an odd TRS.

\medskip

Due the relevance of these systems, it would be certainly important
to have a formula for the classification of low-dimensional 
\virg{Quaternionic} vector bundles, which
generalizes the classifications 
\eqref{eq:intro_tqs3} for complex  vector bundles, or the 
classifications   \eqref{eq:intro_tqs4} for \virg{Real} vector bundles.
Such an achievement  is precisely the main result of this work. It can  be stated as
\begin{equation}\label{eq:intro_tqs43}
\kappa\;:\;{\rm Vec}_{\rr{Q}}^{m}(X,\tau)\;\stackrel{\simeq}{\longrightarrow}\;H^2_{\Z_2}(X|X^\tau,\Z(1))\;,\qquad\quad\forall\ m\in\N \quad\text{if} \quad d\leqslant 3\;,
\end{equation}
and is proved in Theorem \ref{thm:main_in_the_body}.
The cohomology group in the right-hand side of \eqref{eq:intro_tqs43} is the  second Borel equivariant cohomology group of the pair $(X, X^\tau)$ with coefficients in the local system $\Z(1)$ (see Appendix \ref{subsec:borel_cohom}). The element $\kappa$ is called  FKMM invariant and has been studied in \cite{denittis-gomi-14-gen,denittis-gomi-18-I,denittis-gomi-18-II}. By virtue of the isomorphism \eqref{eq:intro_tqs43}, and its comparison with  \eqref{eq:intro_tqs3} and \eqref{eq:intro_tqs4}, we can reformulate our main result as follow:

\medskip

\emph{The FKMM-invariant is the  fundamental characteristic class for the category of \virg{Quaternionic} vector bundles in the sense that it  completely classifies \virg{Quaternionic} vector bundles in low-dimension ($d\leqslant 3$).} 

\medskip 

Let us spend a few words about the history of the isomorphism 
\eqref{eq:intro_tqs43}. 
First of all it is worth noting that odd-rank \virg{Quaternionic} vector bundles can be defined only when $X^\tau=\emptyset$, meaning that the $\Z_2$-action induced by $\tau$ on $X$ is free.
In such a case the relative cohomology group in \eqref{eq:intro_tqs43} reduces to the ordinary cohomology group $H^2_{\Z_2}(X,\Z(1))$
and the result has been proven in \cite[Theorem 1.2 \& Theorem 1.3 (i)]{denittis-gomi-18-II}. Henceforth, let us assume  $X^\tau\neq\emptyset$, and consequently $m\in 2\N$. Initially, the isomorphism
\eqref{eq:intro_tqs43} has been established in the case of certain special involutive spaces called \emph{time-reversal} spheres and tori 
(see Example \ref{ex:TR-ts})
in \cite{denittis-gomi-14-gen}. The generalization to the case of spheres and tori with all possible involutions has been obtained in 
\cite{denittis-gomi-18-I}. The  isomorphism \eqref{eq:intro_tqs43} in the case of a general involutive space with the homotopy type of a $\Z_2$-CW complex of dimension $d\leqslant 2$ has been proven in \cite[Theorem 1.3 (ii)]{denittis-gomi-18-II}. However, in the same claim it is also stated that the map \eqref{eq:intro_tqs43}  in dimension $d=3$ is only injective. The latter claim was based on the study of a potential  counterexample that apparently   violates \eqref{eq:intro_tqs43} in dimension $d=3$ \cite[Section 5]{denittis-gomi-18-II}. It turns out that one of the key argument for the construction of the counterexample results wrong \cite[Lemma 5.3]{denittis-gomi-18-II},
and in fact the isomorphism \eqref{eq:intro_tqs43} is valid also in this case as checked in Section \ref{sec:lens}. 

\medskip

On the one hand, the new contribution of this work consists in the correct extension of  \cite[Theorem 1.3 (ii)]{denittis-gomi-18-II} to the case $d=3$. However, this is not the only benefit. In fact the technique used in this work to achieve Theorem \ref{thm:main_in_the_body}, and in turn   \eqref{eq:intro_tqs43}, is completely different from that used in 
\cite{denittis-gomi-18-II}, and sheds a new light on the nature of the FKMM invariant.

\medskip

In equivariant homotopy theory, the most basic equivariant cohomology theory is the so-called \emph{Bredon equivariant cohomology} \cite{Bre}, rather than the Borel equivariant cohomology
that appears in the definition of the FKMM invariant.
 There are, of course, some relationship between these two equivariant cohomology theories. A crucial step in our analysis consists in showing that the Borel equivariant cohomology $H^n_{\Z_2}(X| X^\tau,\Z(1))$ 
 can be indeed interpreted as  a Bredon equivariant cohomology (Theorem \ref{teo_nat_iso}). This discovery, along with the specific properties of the Bredon equivariant cohomology, allow us to derive the isomorphism \eqref{eq:intro_tqs43} along an elegant, and conceptually new path

\medskip

\noindent
{\bf 
Structure of the paper.}
In Section \ref{sec:Bredon}, we introduce the basic concepts concerning the  Bredon equivariant cohomology.  Section \ref{sec:FKMM} is devoted to the study of the  FKMM invariant from the viewpoint of the Bredon equivariant cohomology and contains the proofs of Theorem   \ref{teo_nat_iso} and Theorem   \ref{thm:main_in_the_body}. In Section \ref{sec:lens}, we 
test   our main result, \ie the isomorphism \eqref{eq:intro_tqs43}, for the 3-dimensional lens space. In particular this gives us a way to amend some errors contained in  \cite{denittis-gomi-18-II}.
  Appendix \ref{subsec:borel_cohom}, written for the benefit of the reader, provides a soft introduction to the Borel equivariant cohomology.
Finally, Appendix \ref{subsec:cohom_EilenbergMacLane} contains some technical result needed for the proof of Theorem \ref{thm:main_in_the_body}.

\medskip

\noindent
{\bf Acknowledgements.}
GD's research is supported by the grant {Fondecyt Regular - 1190204}. KG is supported by JSPS KAKENHI Grant Numbers 20K03606. 


\section{Basic facts about the Bredon  cohomology}
\label{sec:Bredon}

This section is devoted to the introduction of the \emph{Bredon (equivariant) cohomology}. The main reference is Bredon's original work \cite{Bre}, but also  the monograph \cite{May} and the paper \cite{illman-73} will be useful. 
Since we will need some concepts of category theory we will refer to the monographs \cite{freyd-64,maclane-78} as a useful reference.
Let $\n{G}$ be a topological group. By  a \emph{$\n{G}$-space} $X$ we mean a topological space $X$ together with a (left) action of $\n{G}$ on $X$ by homeomorphisms. The  expression  \emph{involutive space} will be used  as a synonym for $\n{Z}_2$-space.
In its full generality the Bredon equivariant cohomology can be defined for $\n{G}$-spaces with the action of a compact Lie group $\n{G}$. However, for the aims of this work, 
it will be enough (when necessary) to restrict our interest to the simpler case
 where $\n{G}$ is a \emph{finite} group.

\subsection{Orbit category}
Let $\n{H}$ be a closed subgroup of $\n{G}$ and denote with $\n{G}/\n{H}$ the corresponding (left) coset space.
We will write $[g]_{\n{H}} \in \n{G}/\n{H}$ to mean the element represented by $g \in \n{G}$, namely $[g]_\n{H}:=\{gh\;|\; h\in\n{H} \}$.
Therefore, by definition, $[g]_{\n{H}} = [gh]_{\n{H}}$ for any $h \in \n{H}$.
The  operation $g'\cdot[g]_{\n{H}}:=[g'g]_{\n{H}}$, with $g'\in\n{G}$, shows
that $\n{G}$ acts (on the left) on  $\n{G}/\n{H}$, and under this action 
$\n{G}/\n{H}$ turns out to be an homogeneous space.
Let $\n{H}$ and $\n{K}$ be closed subgroups of $\n{G}$.
A map $\phi:\n{G}/\n{H}\to \n{G}/\n{K}$ is called $\n{G}$-\emph{equivariant} if $\phi(g'\cdot[g]_{\n{H}})=g'\cdot \phi([g]_{\n{H}})$
for all $g,g'\in\n{G}$. The family of $\n{G}$-{equivariant} maps admits a simple description as proved in  \cite[Section I.3]{Bre}. In fact 
any $\n{G}$-{equivariant} map $\phi:\n{G}/\n{H}\to \n{G}/\n{K}$ is realized as $\phi([g]_{\n{H}}) = [ga]_{\n{K}}$, where   the element $a \in \n{G}$ meets $a^{-1} \n{H} a \subseteq \n{K}$.

\begin{definition}
Let $\n{G}$ be a topological group. The \emph{orbit category} $\mathtt{Orb}_{\n{G}}$ is defined as the category such that the
objects are the  $\n{G}$-space $\n{G}/\n{H}$, where $\n{H} \subseteq \n{G}$ are closed subgroups, and the morphisms are $\n{G}$-equivariant maps.
\end{definition}

\medskip

Let $X$ be a $\n{G}$-space and denote with $O_{\n{G}}(x):=\{g\cdot x\;|\; g\in\n{G}\}$ the \emph{orbit}  of the action through the point $x\in X$.
It turns out that $O_{\n{G}}(x):=\{g\cdot x\;|\; g\in\n{G}\}$
is isomorphic, as a topological $\n{G}$-space, to a coset space 
$\n{G}/\n{H}_x$ where $\n{H}_x:= \{g\in\n{G} \;|\; g\cdot x=x \}$ is the \emph{stabilizer group} of the point $x$.
Therefore, the orbit category of a group $\n{G}$
 gives the category of ``all kinds'' of orbits of 
$\n{G}$.

\medskip

We will write $X^{\n{H}} \subset X$ for the subspace consisting of 
all the points fixed by the $\n{H}$-action, \ie $X^{\n{H}}:=\{x\in X\;|\; hx = x\;\forall h\in \n{H}\}$. It turns out that, if there is $a \in \n{G}$ such that $a^{-1} \n{H} a \subseteq \n{K}$, then there is a map $X^{\n{K}} \to X^{\n{H}}$ given by $x \mapsto ax$. 

\medskip

\begin{example}[Orbit category of $\Z_2$]\label{ex:orb_catZ2}
In view of its relevance for this work, it will be useful to construct 
the orbit category of the cyclic group of order two
$\Z_2 = \{ \pm 1 \}$. In this case, there are only two possible subgroups of $\Z_2$, namely $\n{H}_0:=\{+1\}$ and $\n{H}_1:=\Z_2$. Therefore, 
the associated coset spaces $Z_i:=\Z_2/\n{H}_i$, with $i=0,1$, have a simple description: $Z_0=\{z_+,z_-\}$ is a two-point space (identifiable with $\Z_2$), while $Z_1=\{z\}$ is a singleton.
These two spaces are the only objects of the  orbit category of 
$\mathtt{Orb}_{\n{Z}_2}$. To complete the description of $\mathtt{Orb}_{\n{Z}_2}$ we need to describe all the possible  
$\n{Z}_2$-equivariant maps between the objects. 
In addition to the two \emph{identity} maps ${\rm Id}_i:Z_i\to Z_i$, with $i=0,1$, induced by the action of $+1$, there are two more non-trivial maps. The space $Z_0$ admits a second internal map $\phi_-:Z_0 \to Z_0$ defined by $\phi_-(z_\pm):=z_\mp$, which is induced by the action of $-1$. Then, there is the map $\psi:Z_0\to Z_1$ defined by $\psi(z_\pm):=z$, which is induced by the action of $\pm 1$, irrespectively, in view of the fact that $a^{-1}\n{H}_0 a=\n{H}_0$ for every choice $a=\pm 1$. The category $\mathtt{Orb}_{\n{Z}_2}$ can be represented by the following diagram:
$$
_{{\rm Id}_0,\phi_-}\racts\; Z_0\; \stackrel{\psi}{\longrightarrow}\; Z_1\;{{\lacts}}_{\;{\rm Id}_1}\;.
$$
The fact that there are no equivariant maps from $Z_1$ to $Z_0$ is a consequence of the fact that  $a^{-1}\n{H}_1 a=\Z_2$,  for every $a=\pm1$,
can never be a subset of $\n{H}_0$.
 \hfill $\blacktriangleleft$
\end{example}

\subsection{Equivariant CW complexes}\label{sec_eq_cw}
Let $\n{G}$ be a finite group. A $\n{G}$-equivariant CW complex, or
\emph{$\n{G}$-CW complex} for short, is a $\n{G}$-space which is homotopy equivalent to a CW-complex where the usual $n$-dimensional cells modeled by the $n$-dimensional disk $\n{D}^n$ are replaced by $\n{G}$-equivariant cells of the form $\n{G}/\n{H}\times \n{D}^n$ for some subgroup $\n{H}\subseteq \n{G}$. 
For a precise definition of $\n{G}$-CW complex we refer to \cite[Definition 1.1.1]{allday-puppe-93} or \cite[Section I.1]{Bre}
Since the notion of $\n{G}$-CW complex
 is modeled after the usual definition of CW-complex after replacing the \virg{cells} by \virg{$\n{G}$-cells} it follows that many topological and homological properties of CW-complexes have their \virg{natural} counterparts in the equivariant setting. 
 
\begin{example}[Time-reversal spheres and tori]\label{ex:TR-ts}
Special examples of  $\Z_2$-spaces are provided by the spheres 
 $\n{S}^d\subset \R^{d+1}$ (of radius 1) endowed with the \emph{time-reversal involution} $\tau:(x_0,x_1,\ldots,x_d)\mapsto (x_0,-x_1,\ldots,-x_d)$. A similar type of involution (still called $\tau$) can be defined for  tori $\n{T}^d:=\n{S}^1\times\ldots\times \n{S}^1$ just acting with the time-reversal involution on each copy of the 1-sphere. The $\Z_2$-spaces $(\n{S}^d,\tau)$ and  $(\n{T}^d,\tau)$
 are $\n{Z}_2$-CW complexes, and their structures  have been explicitly described in \cite[Section 4.5]{denittis-gomi-14}.
 \hfill $\blacktriangleleft$
\end{example}

\subsection{Coefficient systems}

Let $\mathtt{Ab}$ be  the \emph{category of abelian groups} and $\n{G}$ a finite group.
Following  \cite[Section I.4]{Bre}, we introduce the following concept:
\begin{definition}
A \textit{coefficient system} for $\n{G}$ is a contravariant functor $\bb{M} : \mathtt{Orb}_{\n{G}} \to \mathtt{Ab}$.
\end{definition}

\medskip

If $\bb{M},\bb{N} : \mathtt{Orb}_{\n{G}} \to \mathtt{Ab}$ are coefficient systems, a  morphism $F:\bb{M}\to\bb{N}$ is a natural transformation of functors. In this way the collection of coefficient systems for $\n{G}$
form an \emph{abelian category}\footnote{For the definition of abelian category we will refer to \cite[Chapter VIII]{maclane-78} or \cite{freyd-64}.}  denoted with $\mathtt{CoSy}_{\n{G}}$.

\medskip

Let us provide some  basic examples of coefficient systems.

\begin{example}\label{ex-homot}
Let $X$ be a (path connected) $\n{G}$-space which admits a fixed point $x_\ast$. For  $n\in\N$, the coefficient system $\underline{\pi}_n[X]$ is defined as 
\[
\underline{\pi}_n[X]\;:\;\n{G}/\n{H} \;\longmapsto\; \pi_n(X^{\n{H}})
\]
for every objects $\n{G}/\n{H}\in \mathtt{Orb}_{\n{G}} $,
where $\pi_n(X^{\n{H}})$ is the $n$-th homotopy group of the space $X^{\n{H}}$ computed with respect to the fixed point $x_\ast$ (which is evidently contained in each $X^{\n{H}}$). 
Let us notice that $\pi_n(X^{\n{H}})\in \mathtt{Ab}$ whenever $n\geqslant 2$. However, for the case $n=1$ the condition $\pi_1(X^{\n{H}})\in \mathtt{Ab}$  amounts to an assumption on the $\n{G}$-space $X$.
If there is a $\n{G}$-equivariant map $\phi : \n{G}/\n{H} \to \n{G}/\n{K}$ realized as $\phi([x]_\n{H}) = [xa]_\n{K}$ by an element $a \in \n{G}$ such that $a^{-1} \n{H} a \subseteq \n{K}$, then the map $f_\phi:X^\n{K} \to X^\n{H}$ given by $f_\phi(x) := ax$ induces a map
$f_{\phi,\ast}: \pi_n(X^{\n{K}})\to \pi_n(X^{\n{H}})$ between the homotopy groups. Identifying $f_{\phi,\ast}$ with the image of $\phi$ under the functor $\underline{\pi}_n[X]$ one gets that
$$
\underline{\pi}_n[X](\phi)\; :\; 
\underline{\pi}_n[X]( \n{G}/\n{K})\; \longrightarrow
\;\underline{\pi}_n[X]( \n{G}/\n{H})\;,
$$
showing that $\underline{\pi}_n[X]$ is contravariant.
 \hfill $\blacktriangleleft$
\end{example}

\begin{example}\label{ex:main_coe:sys}
Let $X$ be a $\n{G}$-CW complex. For any non-negative integer $n\in\{0\}\cup\N$, we write $X_n \subset X$ for the $n$-skeleton of $X$. Then, we have a coefficient system $\underline{C}_n[X]$ defined by 
$$
\underline{C}_n[X]\;:\;\n{G}/\n{H} \;\longmapsto\; H_n^{CW}(X^\n{H})\;:=\;H_n\big(X_n^\n{H}| X_{n-1}^\n{H}, \Z\big)\;,
$$
where on the right hand side there is the $n$-th \emph{cellular} homology group of  $X^\n{H}$, 
which is naturally isomorphic to
 the  relative singular homology (with integer coefficient) of the pair $(X_n^\n{H}, X_{n-1}^\n{H})$. 
The fact that $\underline{C}_n[X]$ is contravariant can be checked exactly as in Example \ref{ex-homot}. The connecting homomorphism in the exact sequence for the triple $(X^\n{H}_n, X^\n{H}_{n-1}, X^\n{H}_{n-2})$ provides a homomorphism
$$
\partial_n \::\; H_n^{CW}(X^\n{H})\; \longrightarrow\; 
H_{n-1}^{CW}(X^\n{H})
$$
which  is natural. Therefore it induces a natural transformation 
 $\underline{\partial}_n : \underline{C}_n[X] \to \underline{C}_{n-1}[X]$ such that $\underline{\partial}_{n-1} \circ\underline{\partial}_n=0$. In other words, $\underline{C}_\bullet[X]$ gives rise to a contravariant functor from $\mathtt{Orb}_{\n{G}}$ to the category 
 $\mathtt{ChComp}$
 of chain complexes.
 \hfill $\blacktriangleleft$
\end{example}

\begin{example}\label{ex:hh}
Let $h^\bullet = \{ h^n \}_{n \in \Z}$ be a \emph{$\n{G}$-equivariant generalized cohomology theory}. By this it is meant that $h^\bullet$ satisfies the straightforward $\n{G}$-equivariant generalization of the standard Eilenberg-Steenrod  axioms of a generalized cohomology theory \cite[Section I.3]{eilenberg-steenrod-52}. A more detailed explanation is given in \cite[Section I.2]{Bre}.
With this, we can define the coefficient system $\underline{h}^n$, with  $n \in \Z$, simply by
$$
\underline{h}^n\;:\;\n{G}/\n{H} \;\longmapsto\; h^n(\n{G}/\n{H})
$$
in view of the fact every $h^n$ is a contravariant functor from the category of (pointed) topological spaces to $\mathtt{Ab}$, by definition.
\hfill $\blacktriangleleft$
\end{example}

\begin{example}[Coefficient systems for $\Z_2$]\label{ex:coef_sysZ2}
Let us now discuss in some details the structure of the coefficient systems
for $\Z_2$ by using the notation introduced in Example \ref{ex:orb_catZ2}. In this case any coefficient system $\bb{M} : \mathtt{Orb}_{\n{Z}_2} \to \mathtt{Ab}$ is specified by two abelian groups $\bb{M}(Z_i)$, with $i=1,2$, together with two  homomorphisms (in addition to the identity homomorphisms)
$$
\bb{M}(\phi_-)\;:\; \bb{M}(Z_0)\;\longrightarrow\; \bb{M}(Z_0)\;,\qquad
\bb{M}(\psi)\;:\; \bb{M}(Z_1)\;\longrightarrow\; \bb{M}(Z_0)\;.
$$
Notice that the direction of $\bb{M}(\psi)$ is reversed with respect to the direction of $\psi$ since $\bb{M}$ must be contravariant.
The map $\bb{M}(\phi_-)$ endows $\bb{M}(Z_0)$ with a $\Z_2$-action,
let  $\bb{M}(Z_0)^{\Z_2}\subseteq \bb{M}(Z_0)$ be the subset of invariant points under this action. Observing that $\psi\circ\phi_-=\psi$ one gets that $\bb{M}(\phi_-)\circ\bb{M}(\psi)=\bb{M}(\psi\circ\phi_-)=\bb{M}(\psi)$ one obtains that the image of $\bb{M}(Z_1)$ under $\bb{M}(\psi)$ is made by invariant points.
In summary, we showed that a  coefficient system $\bb{M}$ for $\Z_2$  consists of: (i) an abelian group $\bb{M}(Z_1)$; (ii) an abelian group $\bb{M}(Z_0)$ endowed with a $\Z_2$-action; (iii)
a homomorphism $\bb{M}(Z_1)\to \bb{M}(Z_0)^{\Z_2}$. 
In the following we will use the symbol $\bb{M}(Z_1)\rightsquigarrow \bb{M}(Z_0)$  as a synthetic description  of the coefficient system $\bb{M}$.
\hfill $\blacktriangleleft$
\end{example}

\subsection{Bredon equivariant cohomology}

Let $\n{G}$ be a finite group and $X$ a $\n{G}$-CW complex. Since the category $\mathtt{CoSy}_{\n{G}}$ of coefficient systems for $\n{G}$ is an abelian category, the $\mathrm{Hom}$-set
$$
\mathrm{Hom}_{\mathtt{CoSy}_{\n{G}}}(\bb{N},\bb{M})
$$
is an abelian group \cite[Theorem 2.39]{freyd-64} for any pair of coefficient systems $\bb{N}$ and $\bb{M}$. 
Now, let us take $\bb{N}$  be the coefficient system $\underline{C}_n(X)$ described in Example \ref{ex:main_coe:sys} and define
$$
C^n_{\n{G}}(X; \bb{M})\;: =\; \mathrm{Hom}_{\mathtt{CoSy}_{\n{G}}}(\underline{C}_n(X), \bb{M})\;.
$$
The natural transformation $\underline{\partial}_n : \underline{C}_n(X) \Rightarrow \underline{C}_{n-1}(X)$ induces a homomorphism
$$
\delta_n \;:\; C^n_{\n{G}}(X; \bb{M})\; \longrightarrow\; 
C^{n+1}_{\n{G}}(X; \bb{M})
$$
satisfying $\delta_{n+1}\circ\delta_n = 0$. This leads to a cochain complex $(C^\bullet_{\n{G}}(X; \bb{M}), \delta_\bullet)$.

\begin{definition}
Let $\n{G}$ be a finite group  and $X$ a $\n{G}$-CW complex. For any coefficient system $\bb{M} \in \mathtt{CoSy}_{\n{G}}$, the \textit{$n$-th Bredon $\n{G}$-equivariant cohomology of $X$ with coefficients in $\bb{M} $} is defined as the $n$-th cohomology of the cochain complex $(C^\bullet_{\n{G}}(X; \bb{M}), \delta_\bullet)$, \ie
$$
\s{H}^n_{\n{G}}(X, \bb{M})\;: =\; \mathrm{Ker}(\delta_n)/\mathrm{Im}(\delta_{n-1}).
$$
\end{definition}

\medskip

As a matter of fact, for any space $X$ with $\n{G}$-action, there is a $\n{G}$-CW complex which is $\n{G}$-equivariantly weakly homotopy equivalent to $X$ \cite{May}. Using such a $\n{G}$-CW complex, the Bredon equivariant cohomology of $X$ is defined. Given an invariant subspace $Y \subseteq X$, it is possible to introduce the relative cohomology $\s{H}^n_{\n{G}}(X| Y,\bb{M})$. Then the Bredon equivariant cohomology groups with coefficients in $\bb{M}$ constitute a $\n{G}$-equivariant generalized cohomology theory.

\begin{example}
In general, Bredon cohomology is difficult to calculate following  the definition. However, in a special case, one can compute it. For example, let $ \bb{M}$ be the coefficient system such that $\bb{M}(\n{G}/\n{H}) = 0$ for any proper subgroup $\n{H} \subset \n{G}$. Hence, only $\bb{M}(\n{G}/\n{G})$ can be non-trivial, and the homomorphisms $\bb{M}(\phi)$ are automatically determined. In particular, the $\n{G}$-action on $\bb{M}(\n{G}/\n{G})$ is trivial. In this case, the cochain complex $C^\bullet_{\n{G}}(X; \bb{M})$ is identified with the cellular cochain complex of the fixed point set $X^\n{G}$, \ie
$$
C^\bullet_{\n{G}}(X; \bb{M})\; \simeq \;C^n(X^\n{G}; \bb{M}(\n{G}/\n{G}))\;.
$$
It follows that $\s{H}^n_{\n{G}}(X,\bb{M}) \simeq H^n(X^\n{G}, \bb{M}(\n{G}/\n{G}))$. 
\hfill $\blacktriangleleft$
\end{example}

\subsection{Eilenberg-Mac Lane spaces}\label{sec:EMLS}
The ordinary cohomology $H^n(X, \n{A})$ of a CW-complex $X$ with coefficients in an abelian group $\n{A}$ can be represented as 
\[
H^n(X, \n{A}) \;\simeq \; [X, K(\n{A}, n)]\;,
\]
where $K(\n{A}, n)$ is the \emph{Eilenberg-Mac Lane space} of type $(\n{A}, n)$. There exists a parallel representation for the Bredon equivariant cohomology. For simplicity, let $n\geqslant 1$. Given a coefficient system $\bb{M} : \mathtt{Orb}_{\n{G}} \to \mathtt{Ab}$, the Eilenberg-Mac Lane space of type $(\bb{M}, n)$ is a path connected $\n{G}$-space $K(\bb{M}, n)$ such that 
\[
\underline{\pi}_k(K(\bb{M}, n))\;=\;
\left\{
\begin{array}{ll}
\bb{M} & (k=n) \\
0 & (k \neq n)\;,
\end{array}
\right.
\]
 where $\underline{\pi}_k$ has been described in Example \ref
{ex-homot},
and the  base point of the homotopy groups is chosen in the set of
 fixed points of $K(\bb{M}, n)$  (therefore $K(\bb{M}, n)^\n{G} \neq \emptyset$ is assumed implicitly). It is known \cite[Section II.6]{Bre} that there exists  a unique $\n{G}$-space,  up to $\n{G}$-equivariant (weak) homotopy equivalence, which provides the identification
\begin{equation}\label{eq:Eil-MC}
\s{H}^n_\n{G}(X, \bb{M})\; \simeq\; [X, K(\bb{M}, n)]_{\n{G}}\;,
\end{equation}
where the symbol in the right hand side denotes the set of $\n{G}$-equivariant homotopy classes of $\n{G}$-equivariant  maps $X \to K(\bb{M}, n)$.
{By construction, for every $\n{G}$-equivariant map $f : X \to K(\bb{M}, n)$ there exists an element 
$\chi^n(f)\in \s{H}^n_\n{G}(X, \bb{M})$ which is the image of the class $[f]\in [X, K(\bb{M}, n)]_{\n{G}}$ under the isomorphism \eqref{eq:Eil-MC}.
The element $\chi^n(f)$ is called the \emph{characteristic class} of $f$ according to the definition given in \cite[Section II.3]{Bre}.
If ${\rm id}:K(\bb{M}, n)\to K(\bb{M}, n)$ is the identity map, then the corresponding element $\iota:=\chi^n({\rm id})$ will be called the \emph{universal Bredon class}. In view of \cite[Chapter II, Proposition 3.2]{Bre} one has that
\[
\chi^n(f)\;=\;\chi^n(f\circ {\rm id})\;=\;f^*\circ \chi^n( {\rm id})\;=\;f^*\circ \iota\;.
\]
Therefore, the isomorphism \eqref{eq:Eil-MC} is realized by $[f]\mapsto f^*\circ \iota$.}

{
\begin{remark}\label{rk:gen_univB}
In the special case $\s{H}^n_\n{G}(K(\bb{M}, n); \bb{M})\simeq\Z$, the  defining property of the universal Bredon class implies that $\iota$ must be
a generator identifiable with $+1$ or $-1$. This can be proved by contradiction. Let $\alpha$ be an isomorphism between $\s{H}^n_\n{G}(K(\bb{M}, n); \bb{M})$ and $\Z$, and assume that $m:=\alpha(\iota)\neq \pm1$. For $\iota'\in \s{H}^n_\n{G}(K(\bb{M}, n); \bb{M})$ such that $\alpha(\iota')=1$, there exists a $\n{G}$-equivariant map $f : K(\bb{M}, n) \to K(\bb{M}, n)$ such that $f^*\circ \iota=\iota'$. Consequently, one can construct the homomorphism $\alpha\circ f^*\circ\alpha^{-1}:\Z\to\Z$
which maps $m$ to $1$. But this contradicts the fact that $m\neq\pm 1$.  
\hfill $\blacktriangleleft$
\end{remark}
}

\begin{example}
Let $\n{A}$ be an abelian group, and $\underline{\n{A}}$ the \emph{constant} coefficient system such that $\underline{\n{A}}(\n{G}/\n{H}) =\n{A}$ for all $\n{G}/\n{H} \in \mathtt{Orb}_{\n{G}} $, and $\underline{\n{A}}(\phi)$ is the identity map for any morphism $\phi$ in $\mathtt{Orb}_{\n{G}}$. Let $K(\n{A}, n)$ be the usual Eilenberg-Mac Lane space of type $(\n{A}, n)$, namely  a path connected space  such that $\pi_k(K(\n{A}, n))\simeq \n{A}$ if $k=n$, and $\pi_k(K(\n{A}, n))=0$ otherwise. Let $\n{G}$ act on $K(\n{A}, n)$ trivially. Then $K(\n{A}, n)^\n{H} = K(\n{A}, n)$ for any subgroup $\n{H} \subseteq\n{ G}$. Hence, the $\n{G}$-space $K(\n{A}, n)$ realizes $K(\underline{\n{A}}, n)$. It follows that
$$
\s{H}^n_\n{G}(X, \underline{\n{A}})\;
\simeq\; [X, K(\underline{\n{A}}, n)]_{\n{G}}\;
\simeq\; [X/\n{G}, K(\underline{\n{A}}, n)]\;\simeq\; H^n(X/\n{G}, \n{A})
$$
for any $\n{G}$-space $X$. In other words, the Bredon equivariant cohomology of a $\n{G}$-space $X$ with coefficients in $\underline{\n {A}}$ agrees with the ordinary cohomology of the quotient space $X/\n{G}$ with coefficients in $\n{A}$.
\hfill $\blacktriangleleft$
\end{example}

\subsection{Atiyah-Hirzebruch spectral sequence}

Let $X$ be a finite CW complex with base point $\ast$, and $h^\bullet = \{ h^n \}_{n \in \Z}$ a generalized cohomology theory satisfying the
Eilenberg-Steenrod  axioms \cite[Section I.3]{eilenberg-steenrod-52}. Then the Atiyah-Hirzebruch spectral sequence is a spectral sequence such that its $E_2$-term is
$$
E_2^{p, q} = H^p(X, h^q(\{\ast\}))
$$
and converges to a graded quotient of $h^\bullet(X)$ \cite[Chapter 9]{spanier-66} or  \cite[Chapter 9]{davis-kirk-01}. The construction of the Atiyah-Hirzebruch spectral sequence can be generalized 
to $\n{G}$-CW complexes, and one gets:
\begin{theorem}[{\cite[Section IV.4]{Bre}}]\label{theo_spec_seq}
Let $\n{G}$ be a finite group, $X$ a finite $\n{G}$-CW complex, and $h^\bullet = \{ h^n \}_{n \in \Z}$ a $\n{G}$-equivariant generalized cohomology theory. Then there is an Atiyah-Hirzebruch spectral sequence such that its $E_2$-term is
$$
E_2^{p, q}\; =\; \s{H}^p_{\n{G}}(X, \underline{h}^q)\;,
$$
where the coefficient system $\underline{h}^q$ has been described in Example \ref{ex:hh},  
and which converges to a graded quotient of $h^\bullet(X)$.
\end{theorem}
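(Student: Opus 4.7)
The plan is to mimic the classical non-equivariant construction of the Atiyah--Hirzebruch spectral sequence, working with the equivariant skeletal filtration instead of the ordinary one, and then to identify the $E_2$-page with Bredon cohomology by unraveling the functoriality of $\underline{h}^q$ on the orbit category. Concretely, let $\emptyset = X_{-1} \subseteq X_0 \subseteq X_1 \subseteq \cdots \subseteq X_N = X$ be the filtration of $X$ by its equivariant skeleta as a $\n{G}$-CW complex. Since each inclusion $X_{p-1} \hookrightarrow X_p$ is a $\n{G}$-equivariant cofibration, the long exact sequences of the pairs $(X_p, X_{p-1})$ in the given $\n{G}$-equivariant cohomology theory $h^\bullet$ assemble into an exact couple in the standard way, with
\[
D_1^{p,q} \;=\; h^{p+q}(X_{p-1})\;,\qquad E_1^{p,q} \;=\; h^{p+q}(X_p, X_{p-1})\;,
\]
the structure maps being the restriction, the connecting, and the inclusion homomorphisms of the triples. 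This produces a cohomological spectral sequence whose abutment is $h^\bullet(X)$ filtered by the kernels of $h^\bullet(X)\to h^\bullet(X_{p-1})$; convergence is automatic since $X$ is a finite $\n{G}$-CW complex, so the filtration is finite.

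The first step is to compute the $E_1$-page. By the definition of a $\n{G}$-CW complex, the quotient $X_p/X_{p-1}$ is $\n{G}$-equivariantly homotopy equivalent to a wedge $\bigvee_{\alpha\in I_p}(\n{G}/\n{H}_\alpha)_+ \wedge S^p$, indexed by the $p$-cells of $X$, where $\n{H}_\alpha \subseteq \n{G}$ is the stabilizer of the $\alpha$-th cell. Applying the equivariant suspension isomorphism and the additivity axiom of $h^\bullet$ yields
\[
E_1^{p,q}\;\cong\; \bigoplus_{\alpha\in I_p} h^{q}(\n{G}/\n{H}_\alpha)\;=\;\bigoplus_{\alpha\in I_p} \underline{h}^q(\n{G}/\n{H}_\alpha)\;.
\]
On the other hand, the cellular chain functor $\underline{C}_p[X]$ from Example \ref{ex:main_coe:sys} evaluated at $\n{G}/\n{H}$ is the free abelian group on those $p$-cells fixed by $\n{H}$, with transition maps induced by the orbit category morphisms. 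A direct computation using the representability
$\mathrm{Hom}_{\mathtt{CoSy}_{\n{G}}}(\underline{C}_p[X],\underline{h}^q) \simeq \bigoplus_\alpha \underline{h}^q(\n{G}/\n{H}_\alpha)$ (by the Yoneda-type lemma for orbit categories) then identifies $E_1^{p,q}$ with the Bredon cochain group $C^p_{\n{G}}(X;\underline{h}^q)$.

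The crux, and the main obstacle, is to check that the differential $d_1 : E_1^{p,q} \to E_1^{p+1,q}$ coincides under this identification with the Bredon coboundary $\delta_p$ induced by $\underline{\partial}_{p+1}$. This amounts to verifying that the map $h^{p+q}(X_p, X_{p-1}) \to h^{p+q+1}(X_{p+1}, X_p)$, given by the composition of the connecting homomorphism of the triple $(X_{p+1}, X_p, X_{p-1})$ with the restriction, is the dual of the equivariant cellular boundary. This is proved exactly as in the non-equivariant case, by reducing, via the characteristic maps of equivariant cells, to the universal calculation in the model wedge of $\n{G}$-spheres and invoking the equivariant suspension isomorphism together with the naturality of the connecting homomorphism on the orbit category (see \cite[Section I.7]{Bre}). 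Once this identification is established, one gets $E_2^{p,q} = \ker(\delta_p)/\mathrm{Im}(\delta_{p-1}) = \s{H}^p_{\n{G}}(X, \underline{h}^q)$, as claimed.

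Finally, the abutment statement follows directly from the general convergence theorem for spectral sequences arising from a finite filtration: the associated graded of the decreasing filtration $F^p h^n(X) := \ker\bigl(h^n(X)\to h^n(X_{p-1})\bigr)$ is isomorphic to $\bigoplus_{p+q=n} E_\infty^{p,q}$, which gives precisely the desired graded quotient of $h^\bullet(X)$.
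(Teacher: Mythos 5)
Your construction is correct and coincides with the paper's own justification: the theorem is stated with no proof beyond the citation to \cite[Section IV.4]{Bre}, and Bredon's argument there is exactly the skeletal-filtration exact couple you set up, with $E_1^{p,q}$ identified with the Bredon cochain group $C^p_{\n{G}}(X;\underline{h}^q)$ via the suspension isomorphism, additivity over the $\n{G}$-cells, and the representability (Yoneda) description of $\mathrm{Hom}_{\mathtt{CoSy}_{\n{G}}}(\underline{C}_p[X],-)$, followed by the matching of $d_1$ with the equivariant cellular coboundary. No gaps worth flagging.
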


\begin{example}
As a trivial application, let us take a generalized cohomology theory $h^\bullet$, and define an equivariant generalized cohomology theory $h_{\n{G}}^\bullet$ by $h^n_{\n{G}}(X): = h^n(X/\n{G})$ through the quotient $X/\n{G}$ of the $G$-CW complex $X$. The coefficient system $\underline{h}^\bullet_G$ turns out to be the constant coefficient system $\underline{h^n(\{\ast\})}$. Then the $E_2$-term of the spectral sequence  is
$$
E_2^{p, q}\; =\; \s{H}^p_{\n{G}}(X, \underline{h}_{\n{G}}^q)
\;=\; \s{H}^p_{\n{G}}(X, \underline{h^n(\{\ast\})})
\;=\; {H}^p(X/\n{G}, h^q(\{\ast\})),
$$
and reduces to the usual Atiyah-Hirzebruch spectral sequence of the orbit $X/\n{G}$.
\hfill $\blacktriangleleft$
\end{example}

\section{The FKMM invariant and Bredon cohomology}
\label{sec:FKMM}

This section is devoted to the study of the FKMM invariant from the viewpoint of the Bredon equivariant cohomology. This will provide the framework for the proof of the main Theorem \ref{thm:main_in_the_body}.

\subsection{Borel and Bredon equivariant cohomology}
The first necessary step is to relate the
 Borel equivariant cohomology with the Bredon equivariant cohomology.
In order to help the reader, a short review about the Borel equivariant  cohomology is presented in Appendix \ref{subsec:borel_cohom}. In the following we will focus on the case of a $\Z_2$-space $X$ with involution $\tau$. The set of the fixed point will be denoted with $X^\tau$.
A $\Z_2$-CW pair $(X, Y)$ is given by a  $\Z_2$-CW complex $X$ a $\Z_2$-CW subcomplex $Y$ and a 
sub-complex inclusion $Y\hookrightarrow X$. In particular one has that $(X, X^\tau)$ is a $\Z_2$-CW pair. Given a $\Z_2$-CW pair $(X, Y)$, we will denote with $H^\bullet_{\Z_2}(X| Y,\s{Z})$
 the equivariant Borel  cohomology of $X$ relative to $Y$  with local coefficients in $\s{Z}$. 
 By convention we will fix $H^n_{\Z_2}(X| Y,\s{Z})=0$ for every $n<0$.

\medskip

\begin{lemma}\label{lemma:eq_coho-the}
For any $\Z_2$-CW pair $(X, Y)$ and $n \in \Z$, we use the Borel equivariant cohomology to define
$$
\begin{aligned}
h^n_{j}(X, Y)\; &:=\;  H^n_{\Z_2}(X| Y \cup X^\tau, \Z(j))\;,\qquad j=0,1\;.\\
\end{aligned}
$$
Then $h_{0}^\bullet: = \{ h^n_{0} \}_{n \in \Z}$ and $h_{1}^\bullet: = \{ h^n_{1} \}_{n \in \Z}$ are $\Z_2$-equivariant generalized cohomologies. 
\end{lemma}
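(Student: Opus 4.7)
The plan is to verify, for each $j\in\{0,1\}$, the five $\Z_2$-equivariant Eilenberg-Steenrod axioms (functoriality, homotopy invariance, exactness of a pair, excision, and additivity) by reducing each one to the corresponding property of the absolute Borel equivariant cohomology $H^\bullet_{\Z_2}(-|-,\Z(j))$. The two values of $j$ play identical roles, so one argument will cover both.

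Functoriality and homotopy invariance are essentially automatic. A $\Z_2$-equivariant map $f:X\to X'$ between $\Z_2$-CW complexes sends $X^\tau$ into $(X')^\tau$, so any equivariant map of pairs $f:(X,Y)\to(X',Y')$ induces a map $(X,Y\cup X^\tau)\to(X',Y'\cup(X')^\tau)$; pulling back in the Borel theory yields $f^*:h^n_j(X',Y')\to h^n_j(X,Y)$, and this assignment is functorial. An equivariant homotopy is such a map of pairs out of $X\times I$ (with trivial $\Z_2$-action on $I$), so it induces equal maps in Borel cohomology by the homotopy invariance of the latter.

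The heart of the proof is the long exact sequence. For a $\Z_2$-CW pair $(X,Y)$ I would apply the exact sequence of the triple $(X,\;Y\cup X^\tau,\;X^\tau)$ in Borel equivariant cohomology with coefficients in $\Z(j)$:
\begin{equation*}
\ldots\;\longrightarrow\; H^n_{\Z_2}(X| Y\cup X^\tau)\;\longrightarrow\; H^n_{\Z_2}(X| X^\tau)\;\longrightarrow\; H^n_{\Z_2}(Y\cup X^\tau| X^\tau)\;\longrightarrow\; H^{n+1}_{\Z_2}(X| Y\cup X^\tau)\;\longrightarrow\;\ldots
\end{equation*}
The crucial identification is $H^n_{\Z_2}(Y\cup X^\tau| X^\tau,\Z(j))\cong H^n_{\Z_2}(Y| Y^\tau,\Z(j))$, which I would establish via the pushout decomposition $Y\cup X^\tau=Y\cup_{Y^\tau} X^\tau$: collapsing $X^\tau$ to a basepoint in the left-hand quotient is the same as collapsing $Y^\tau$ inside $Y$, so the two relative cohomologies agree (this is $\Z_2$-equivariant excision applied to the sub-complex inclusion $Y^\tau\hookrightarrow X^\tau$ of CW-pairs, and it carries the local coefficient system along because that system on $X$ restricts consistently to both sides). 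Substituting $h^n_j(X,Y)$, $h^n_j(X,\emptyset)$, and $h^n_j(Y,\emptyset)$ for the three surviving terms produces the long exact sequence for $(X,Y)$ required by the axioms.

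Excision and additivity are then straightforward. For a $\Z_2$-invariant subcomplex $U\subseteq Y$ whose equivariant closure lies in the equivariant interior of $Y$, the identities $(X\setminus U)^\tau= X^\tau\setminus U^\tau$ and $(Y\cup X^\tau)\setminus U= (Y\setminus U)\cup (X\setminus U)^\tau$ let one apply equivariant excision for the Borel theory verbatim. Additivity over disjoint unions of $\Z_2$-CW pairs is immediate because the operation $X\mapsto X^\tau$ distributes over disjoint unions. The main obstacle is the triple/excision identification in the previous paragraph (in particular making sure that the local coefficient system $\Z(j)$ is handled correctly through the collapse $X^\tau\to\{\mathrm{pt}\}$); once it is in place, the remaining axioms are a matter of formal bookkeeping.
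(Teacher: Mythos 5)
Your proposal is correct and follows essentially the same route as the paper: the exactness axiom is obtained from the long exact sequence of the triple $(X,\,Y\cup X^\tau,\,X^\tau)$ in Borel equivariant cohomology together with the excision identification $H^n_{\Z_2}(Y\cup X^\tau|X^\tau,\Z(j))\simeq H^n_{\Z_2}(Y|Y\cap X^\tau,\Z(j))=h^n_j(Y,\emptyset)$, and the remaining axioms are inherited from the Borel theory. Your treatment of functoriality, excision and additivity is just a more explicit spelling-out of what the paper dispatches in one sentence.
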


\begin{proof}
The Borel equivariant cohomology is an equivariant generalized cohomology
\cite[Theorem 1.2.6]{allday-puppe-93}, meaning that it satisfies  the equivariant version  of  the Eilenberg-Steenrod axioms  of  a generalized cohomology theory.
 Thus, for any $\Z_2$-CW complex $X$ and its subcomplexes $Y, Y' \subset X$, one has the exact sequence for the triad $(X, Y, Y')$
$$
\cdots 
H^n_{\Z_2}(X| Y \cup Y',\s{Z})\; \to\;
H^n_{\Z_2}(X| Y',\s{Z}) \;\to\;
H^n_{\Z_2}(Y| Y \cap Y',\s{Z}) \;\to\;
H^{n+1}_{\Z_2}(X| Y \cup Y',\s{Z}) \;
\cdots
$$
where the isomorphism
$$
H^n_{\Z_2}(Y\cup  Y'|  Y',\s{Z}) \;\simeq\;H^n_{\Z_2}(Y| Y \cap Y',\s{Z}) 
$$
has been used in third place.
Setting  $Y' = X^\tau$ and $\s{Z}=\Z(j)$ with $j=0$ or $j=1$, one gets the exact sequence 
$$
\cdots \;\to\;
h^n_{j}(X, Y) \;\to\;
h^n_{j}(X,\emptyset) \;\to\;
h^n_{j}(Y,\emptyset) \;\to\;
h^{n+1}_{j}(X, Y) \;\to\;
\cdots
$$
which shows that $h^\bullet_{j}$ is subject to the exactness axiom. 
The other axioms for the equivariant generalized cohomology theory for $h^\bullet_{j}$ follow from those of $H_{\Z_2}^\bullet$. 
\end{proof}

\medskip

The group $\Z_2$ can act on $\Z$ in two ways. There is the \emph{trivial} action induced by  $- 1 \in \Z_2$ on $n \in \Z$ by $-1: n \mapsto n$, and there is the \emph{flip} action given by $-1: n \mapsto -n$. We will write $\widetilde{\Z}$ for the group $\Z$ endowed with the flip action and we will use the symbol  
$\Z$ for the case of the trivial action. Using the notation introduced in Example \ref{ex:coef_sysZ2} we will introduce three coefficient systems for $\Z_2$. The first one, denoted with $0\rightsquigarrow 0$, consists in identifying both  the orbits $Z_0$ and $Z_1$ with the trivial group $0$.
In the second coefficient system $0\rightsquigarrow \Z$ the orbit  
$Z_0$ is again identified with $0$ while the orbit $Z_1$ is identified with $\Z$ (trivial action). The last coefficient system $0\rightsquigarrow \widetilde{\Z}$ is obtains as the previous one, but now $Z_1$ is identified with $\widetilde{\Z}$ (flip action).

\begin{lemma}\label{lemma:ident_coesys}
Let $h_{j}^\bullet$, with $j=0,1$,  be the equivariant cohomology theory defined in Lemma \ref{lemma:eq_coho-the} and $\underline{h}_{j}^\bullet$ the associated coefficient systems constructed according to 
the procedure described in Example \ref{ex:hh}.
Then, one has that 
\begin{align*}
\underline{h}_{0}^n
&\;=\;
\left\{
\begin{array}{ll}
0 \rightsquigarrow \Z & (n = 0) \\
0 \rightsquigarrow 0& (n \neq 0)\;,
\end{array}
\right.
&
\underline{h}_{1}^n
&\;=\;
\left\{
\begin{array}{ll}
0 \rightsquigarrow \widetilde{\Z} & (n = 0) \\
0 \rightsquigarrow 0 & (n \neq 0)\;.
\end{array}
\right.
\end{align*}
\end{lemma}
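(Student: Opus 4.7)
\medskip

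\noindent\textbf{Proof plan.} By the construction in Example \ref{ex:hh}, the coefficient system $\underline{h}^n_j$ is obtained by evaluating $h^n_j$ on the orbits $\Z_2/\n{H}$. Since the orbit category $\mathtt{Orb}_{\Z_2}$ has only the two objects $Z_0=\Z_2/\{+1\}$ and $Z_1=\Z_2/\Z_2$ (Example \ref{ex:orb_catZ2}), and since by Example \ref{ex:coef_sysZ2} a coefficient system is pinned down by the two abelian groups $\underline{h}^n_j(Z_0)$ and $\underline{h}^n_j(Z_1)$, the $\Z_2$-action on $\underline{h}^n_j(Z_0)$ induced by $\phi_-$, and the equivariant homomorphism $\underline{h}^n_j(Z_1)\to\underline{h}^n_j(Z_0)^{\Z_2}$ induced by $\psi$, the proof reduces to four local computations.

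First I would evaluate at $Z_1$. The space $Z_1$ is a single point with trivial $\Z_2$-action, so $Z_1^\tau=Z_1$, and
\[
\underline{h}^n_j(Z_1)\;=\;h^n_j(Z_1,\emptyset)\;=\;H^n_{\Z_2}\big(Z_1\,|\,Z_1,\Z(j)\big)\;=\;0
\]
for every $n\in\Z$ and $j=0,1$, by the vanishing of the relative cohomology of a pair $(X,X)$. This immediately forces the structure map $\underline{h}^n_j(\psi):\underline{h}^n_j(Z_1)\to \underline{h}^n_j(Z_0)^{\Z_2}$ to be zero, which disposes of one of the pieces of data.

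Second, I would evaluate at $Z_0$. Here the $\Z_2$-action is free, hence $Z_0^\tau=\emptyset$ and
\[
\underline{h}^n_j(Z_0)\;=\;H^n_{\Z_2}\big(Z_0,\Z(j)\big).
\]
The Borel construction $Z_0\times_{\Z_2}E\Z_2$ is homotopy equivalent to $E\Z_2$, which is contractible and simply connected, so every local coefficient system descends to a constant one and the cohomology reduces to that of a point. Thus $\underline{h}^n_j(Z_0)\simeq\Z$ when $n=0$ and vanishes otherwise, for both values of $j$.

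Third, and this is the step where one must be most careful, I would determine the $\Z_2$-action on $\underline{h}^0_j(Z_0)\simeq\Z$ induced by $\phi_-$. I would use the identification
\[
H^0_{\Z_2}\big(Z_0,\Z(j)\big)\;\simeq\;\mathrm{Hom}_{\Z_2}\big(Z_0,\Z(j)\big),\qquad f\longmapsto f(z_+),
\]
which is an isomorphism of abelian groups onto $\Z$ since a $\Z_2$-equivariant map $Z_0\to\Z(j)$ is determined by its value at $z_+$. Then $\phi_-^* f$ satisfies $(\phi_-^*f)(z_+)=f(z_-)=(-1)\cdot f(z_+)$, where $(-1)$ denotes the action of the nontrivial element of $\Z_2$ on the coefficient module $\Z(j)$. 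For $j=0$ this action is trivial, giving $\underline{h}^0_0(Z_0)=\Z$; for $j=1$ it is negation, giving $\underline{h}^0_1(Z_0)=\widetilde{\Z}$. This is the only point where the distinction between the two coefficient systems enters, and it is the main (though modest) subtlety: one must track how the free part of the action interacts with the twist in the coefficient module.

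Combining the four computations — vanishing of $\underline{h}^n_j(Z_1)$, computation of $\underline{h}^n_j(Z_0)$, identification of the $\phi_-$-action, and the automatic vanishing of $\underline{h}^n_j(\psi)$ — yields exactly the descriptions $0\rightsquigarrow\Z$ and $0\rightsquigarrow\widetilde{\Z}$ in degree $0$, and $0\rightsquigarrow 0$ otherwise, as claimed.
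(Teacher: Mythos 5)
Your proposal is correct and follows the same overall decomposition as the paper: evaluate $h^n_j$ on the two orbits $Z_1$ and $Z_0$, note that the value at $Z_1$ is tautologically zero (so the structure map $\psi$ carries no information), and reduce the computation at $Z_0$ to a point via the freeness of the action. Where you diverge is in the execution of the two delicate sub-steps. For $H^n_{\Z_2}(Z_0,\Z(1))$ the paper invokes the exact sequence of \cite[Proposition 2.3]{gomi-15} relating $H^\bullet_{\Z_2}(-,\Z)$, $H^\bullet(-,\Z)$ and $H^\bullet_{\Z_2}(-,\Z(1))$, obtaining $H^0_{\Z_2}(Z_0,\Z(1))\simeq(\Z\oplus\Z)/\Z$; you instead observe that the homotopy quotient of $Z_0$ is simply connected, so the local system $\Z(1)$ becomes constant and the answer is immediate for both $j$ at once. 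Likewise, for the $\phi_-$-action the paper traces the swap $(m,n)\mapsto(n,m)$ on $H^0(Z_0,\Z)$ through the quotient, while you identify $H^0_{\Z_2}(Z_0,\Z(j))$ with equivariant locally constant maps $Z_0\to\Z(j)$ and read off $(\phi_-^*f)(z_+)=f(z_-)=(-1)^jf(z_+)$ directly. Both routes are valid; yours is more elementary and handles $j=0,1$ uniformly, at the cost of asserting (rather than citing) the degree-zero identification of Borel cohomology with equivariant functions — a standard fact, easily checked by hand for a two-point orbit, and the degree-zero analogue of the first isomorphism in \eqref{eq:iso:eq_cohom}. The paper's exact-sequence argument has the minor advantage of staying entirely within machinery already quoted elsewhere in the text.
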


\begin{proof}
First of all let us evaluate  $\underline{h}^n_{j}$ on the orbit $Z_1=\Z_2/\Z_2=\{z\}$ which is a singleton. From its very definition one gets that 
$$
\underline{h}^n_{j}(Z_1)\;=\;{h}^n_{j}(\{z\},\emptyset)\;=\;
H^n_{\Z_2}(\{z\}| \{z\}, \Z(j))\;=\;0
$$
where the last (tautological) equality works independently of $n\in\Z$ and $j=0,1$.
As the second step let us evaluate $\underline{h}^n_{j}$ on the orbit $Z_0=\Z_2/\{+1\}=\{z_+,z_-\}$ which is a two-point space (identifiable with $\Z_2$) on which $\Z_2$ acts freely. For $j=0$ one has that
\begin{equation}\label{eq:h0}
\underline{h}^n_{0}(Z_0)
\;=\; h^n_{0}(Z_0,\emptyset)
\;=\; H^n_{\Z_2}(Z_0,\Z(0)) \;\simeq\;H^n_{\Z_2}(Z_0,\Z)\;.
\end{equation}
To compute the Borel cohomology we need to construct the homotopy quotient of $Z_0$ according to 
\eqref{eq:homot_quot}. However, since $\Z_2$ acts freely on $\Z_0$ it follows that the homotopy quotient is  homotopy equivalent to the regular quotient $Z_0/\Z_2\simeq\{\ast\}$ which is a sigleton. Then, starting from the last isomorphism in \eqref{eq:h0} one gets
\begin{equation}\label{eq:jja}
\underline{h}^n_{0}(Z_0)
 \;\simeq\;H^n_{\Z_2}(Z_0,\Z)  \;\simeq\;H^n(\{\ast\},\Z)\;=\;
\left\{
\begin{array}{ll}
\Z & (n = 0) \\
0 & (n \neq 0)
\end{array}
\right.\;.
\end{equation}
Moreover, 
the $\Z_2$-action on $\{\ast\}$ induces the trivial action on $H^0(\{\ast\},\Z)=\Z$. 
For $j=1$ the  equivalent of \eqref{eq:h0} reads
\begin{equation}\label{eq:h00}
\underline{h}^n_{1}(Z_0)
\;=\; h^n_{1}(Z_0,\emptyset)
\;=\; H^n_{\Z_2}(Z_0,\Z(1))\;.
\end{equation}
By using the exact sequence in \cite[Proposition 2.3]{gomi-15}
one obtains 
$$
\cdots\; 
H^n_{\Z_2}(Z_0,\Z)\; \to\;
H^n(Z_0,\Z) \;\to\;
H^n_{\Z_2}(Z_0,\Z(1))\; \to\;
H^{n+1}_{\Z_2}(Z_0,\Z)\;
\cdots
$$
In view of \eqref{eq:jja} one gets
$$
H^n_{\Z_2}(Z_0,\Z(1))\;\simeq\; H^n(Z_0,\Z)\;=\;0\;, \qquad n\neq 0
$$
and
$$
0\;\to\;\Z\; \to\;
\Z\oplus\Z \;\to\;
H^0_{\Z_2}(Z_0,\Z(1))\; \to\;
0\; 
$$
where the isomorphism $H^0(Z_0,\Z)\simeq\Z\oplus\Z$ for the two-point space $Z_0$
has been used. This implies that 
$$
H^0_{\Z_2}(Z_0,\Z(1))\;\simeq\; (\Z \oplus \Z)/\Z \;\simeq\;\Z
$$  
as an abelian group.  To identify the $\Z_2$-action on $H^0_{\Z_2}(Z_0,\Z(1))$ let us observe that  $\Z_2$ acts on $H^0_{\Z_2}(Z_0,\Z) \simeq \Z$ trivially as discussed after \eqref{eq:jja}. On the other hand the induced $\Z_2$-action on $H^0(Z_0,\Z) \cong \Z \oplus \Z$ is given by $(m, n) \mapsto (n, m)$. As a consequence the induced $\Z_2$-action on $H^0_{\Z_2}(Z_0,\Z(1))$ is the non-trivial one. In summary, we showed that
\begin{equation}\label{eq:jjaa}
\underline{h}^n_{1}(Z_0)
 \;\simeq\;
\left\{
\begin{array}{ll}
\widetilde{\Z} & (n = 0) \\
0 & (n \neq 0)\;.
\end{array}
\right.
\end{equation}
and this concludes the proof.
\end{proof}

\medskip

We are now in position to compare the Borel and Bredon equivariant cohomology groups.

\begin{theorem}\label{teo_nat_iso}
For any finite $\Z_2$-CW complex $X$ and $n \in \Z$, there are natural isomorphisms of groups
\begin{align*}
H^n_{\Z_2}(X| X^\tau, \Z(0))\;&\simeq\; \s{H}^n_{\Z_2}(X, 0 \rightsquigarrow \Z)\\
H^n_{\Z_2}(X| X^\tau,\Z(1))\; &\simeq \;\s{H}^n_{\Z_2}(X, 0 \rightsquigarrow \widetilde{\Z})\;.
\end{align*}
\end{theorem}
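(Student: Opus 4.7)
The plan is to apply the equivariant Atiyah--Hirzebruch spectral sequence (Theorem \ref{theo_spec_seq}) to the two $\Z_2$-equivariant generalized cohomology theories $h_0^\bullet$ and $h_1^\bullet$ provided by Lemma \ref{lemma:eq_coho-the}, and to exploit the strong vanishing of their associated coefficient systems given by Lemma \ref{lemma:ident_coesys}. Since the coefficient systems $\underline{h}_j^q$ ($j=0,1$) vanish for $q\neq 0$, the spectral sequence is concentrated in a single horizontal row and therefore collapses at the $E_2$-page without extension problems.

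More precisely, take $Y=\emptyset$ in the definition of $h_j^\bullet$, so that by construction
\[
h_0^n(X,\emptyset)\;=\;H^n_{\Z_2}(X| X^\tau,\Z(0))\;,
\qquad
h_1^n(X,\emptyset)\;=\;H^n_{\Z_2}(X| X^\tau,\Z(1))\;.
\]
For each of these theories, Theorem \ref{theo_spec_seq} provides an Atiyah--Hirzebruch spectral sequence with $E_2$-page
\[
E_2^{p,q}\;=\;\s{H}^p_{\Z_2}\bigl(X,\,\underline{h}_j^{\,q}\bigr)
\]
converging to a graded quotient of $h_j^{\,\bullet}(X)=h_j^{\,\bullet}(X,\emptyset)$. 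Lemma \ref{lemma:ident_coesys} tells us that $\underline{h}_j^{\,q}$ is the zero coefficient system $0\rightsquigarrow 0$ for every $q\neq 0$, and the Bredon cohomology with coefficients in the zero system is trivial. Hence $E_2^{p,q}=0$ whenever $q\neq 0$, so the spectral sequence degenerates at the $E_2$-page: all differentials vanish for dimensional reasons, and the filtration of $h_j^{\,n}(X)$ has only a single non-trivial graded piece, namely $E_\infty^{n,0}=E_2^{n,0}$.

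Consequently, for every $n\in\Z$ we obtain canonical isomorphisms
\[
h_j^{\,n}(X,\emptyset)\;\simeq\;E_2^{n,0}\;=\;\s{H}^n_{\Z_2}\bigl(X,\,\underline{h}_j^{\,0}\bigr)\;.
\]
Inserting the two identifications $\underline{h}_0^{\,0}=(0\rightsquigarrow\Z)$ and $\underline{h}_1^{\,0}=(0\rightsquigarrow\widetilde{\Z})$ from Lemma \ref{lemma:ident_coesys} gives exactly the two isomorphisms stated in the theorem. Naturality with respect to $\Z_2$-equivariant maps of $\Z_2$-CW complexes is automatic: the Atiyah--Hirzebruch spectral sequence is functorial, and because the filtration reduces to a single step the induced map on $h_j^{\,n}$ coincides with the induced map on $E_2^{n,0}=\s{H}^n_{\Z_2}(\,\cdot\,,\underline{h}_j^{\,0})$.

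The only delicate point I foresee is verifying that the equivariant Atiyah--Hirzebruch spectral sequence indeed applies to the relative Borel cohomology theories $h_j^\bullet$ defined in Lemma \ref{lemma:eq_coho-the}; once that is granted, the concentration of the coefficient systems in a single degree removes every potentially hard step (no differentials to compute, no extensions to resolve). This is precisely where Lemma \ref{lemma:eq_coho-the} does the crucial work, by ensuring that $h_j^\bullet$ satisfies the $\Z_2$-equivariant Eilenberg--Steenrod axioms and is therefore a valid input for Theorem \ref{theo_spec_seq}.
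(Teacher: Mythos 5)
Your proposal is correct and follows exactly the paper's own argument: apply the equivariant Atiyah--Hirzebruch spectral sequence of Theorem \ref{theo_spec_seq} to the theories $h_j^\bullet$ of Lemma \ref{lemma:eq_coho-the}, observe via Lemma \ref{lemma:ident_coesys} that the coefficient systems are concentrated in degree $q=0$ so the sequence collapses at $E_2$, and read off $h_j^n(X)\simeq E_2^{n,0}=\s{H}^n_{\Z_2}(X,\underline{h}_j^{\,0})$. Your added remarks on naturality and the absence of extension problems are consistent elaborations of the same proof.
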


\begin{proof}
The Atiyah-Hirzebruch spectral sequence described in Theorem \ref{theo_spec_seq} provides
$$
E_2^{p, q}\; =\; \s{H}^p_{\Z_2}(X, \underline{h}^q_{j})\; \Rightarrow\;
h^\bullet_{j}(X)\;,\qquad j=0,1\;.
$$
By the description of the coefficient system $\underline{h}^\bullet_{j}$ given in Lemma \ref{lemma:ident_coesys}, one deduces that the spectral sequence degenerates at $E_2$, and this yields
$$
H^n_{\Z_2}(X| X^\tau,\Z(j))\;
=\; h^n_{j}(X,\emptyset)
\;\simeq\; E_\infty^{n, 0} \;\simeq\;E_2^{n, 0}
\;=\; \s{H}^n_{\Z_2}(X,\underline{h}^0_{j} )\;.
$$
The identification of $\underline{h}^0_{j}$ given in Lemma \ref{lemma:ident_coesys} concludes the proof.
\end{proof}

\subsection{The universal FKMM invariant}
Let us review here some cohomological aspects  of the universal FKMM invariant as described in \cite[Section 6]{denittis-gomi-14-gen}  and \cite[Section 2.6]{denittis-gomi-18-I}. For any positive integer $k\in\N$, let
$$
\s{B}_{2k}\;: =\; {\rm Gr}_{2k}(\C^\infty)\; =\; 
\varinjlim_n
\n{U}(2n)/(\n{U}(2k) \times \n{U}(2n-2k))
$$
be the Grassmannian (or the classifying space) of the unitary group $\n{U}(2k)$ in dimension $2k$. 
It is known that $\s{B}_{2k}$ is path-connected, which implies $\pi_0(\s{B}_{2k})=0$. Moreover, the homotopy groups of  $\s{B}_{2k}$ are related to the homotopy groups of $\n{U}(2k)$ by the formula
$\pi_n(\s{B}_{2k})=\pi_{n-1}(\n{U}(2k))$ \cite[eq. A.3]{denittis-gomi-18-III}. In particular this provides $\pi_1(\s{B}_{2k})=0$ and $\pi_2(\s{B}_{2k})\simeq\Z$, showing that $\s{B}_{2k}$ is also
simply connected. The cohomology ring of $\s{B}_{2k}$ is
\begin{equation}\label{eq:cohom_rin}
H^\bullet(\s{B}_{2k},\Z)\;\simeq\;\Z[\rr{c}_1,\ldots,\rr{c}_{2k}]\;,\qquad \rr{c}_j\;\in\: H^{2j}(\s{B}_{2k},\Z)\;,
\end{equation}
which is the ring of polynomials with integer coefficients and $2k$ generators $\rr{c}_j$ of even degree called universal Chern classes
 \cite[Appendix A]{denittis-gomi-18-III}.
Consider the involution $\rho : \n{U}(2k)\to \n{U}(2k)$ given by $\rho(U): = Q \overline{U} Q^{-1}$, where $Q \in \n{U}(2k)$ is the matrix
$$
Q \;:=\;
\left(
\begin{array}{cc}
0 & -{\bf 1}_{\C^k} \\
{\bf 1}_{\C^k} & 0
\end{array}
\right)
$$
and $\overline{U}$ denotes the complex conjugate of  $U$.
We also write $\rho : \s{B}_{2k} \to \s{B}_{2k}$ for the involution induced from the involution  on $\n{U}(2k)$. Let $\mathscr{T}^\infty_{2k} \to \s{B}_{2k}$ denote the universal (or tautological) vector bundle of rank $2k$ with total space $\mathscr{T}^\infty_{2k} :=\varinjlim_n \mathscr{T}^n_{2k}$
defined through
$$
\mathscr{T}^n_{2k}\;:=\;\{(\Sigma,{\rm v})\in {\rm Gr}_{2k}(\C^n)\times\C^n\;|\; {\rm v}\in\Sigma\}\;.
$$
Since $\mathscr{T}^\infty_{2k}$ serves as the universal ``Quaternionic'' vector bundle of rank $2k$, it turns out that the $\Z_2$-space $\s{B}_{2k}$ is the classifying space of ``Quaternionic'' vector bundles of rank $2k$ \cite[Theorem 2.4]{denittis-gomi-14-gen}. 
The \emph{universal FKMM invariant} is the FKMM invariant of the universal bundle, \ie
$$
\kappa_{\rm univ}\; :=\; \kappa(\mathscr{T}^\infty_{2k})\;
\in\; H^2_{\Z_2}(\s{B}_{2k}| \s{B}_{2k}^\rho,\Z(1))\;.
$$

\medskip

For any $\n{Z}_2$-CW complex $X$ and $n \in \Z$, the inclusion $j:(X, \emptyset) \hookrightarrow (X, X^\tau)$ induces a natural homomorphism 
$$
j^* \;:\; H^n_{\Z_2}(X| X^\tau,\Z(1))\; \longrightarrow\; H^n_{\Z_2}(X	,\Z(1))
$$
which fits into the exact sequence \eqref{eq:Long_seq} of the pair $(X, X^\tau)$. There is also a natural homomorphism
$$
f \;: \;H^n_{\Z_2}(X,\Z(1)) \;\longrightarrow\;  H^n(X,\Z)
$$
forgetting the $\Z_2$-action, which also fits into an exact sequence \cite[Proposition 2.3]{gomi-15}. If $X$ is path connected, then the Hurewicz homomorphism (in homology) induces a natural homomorphism
$$
H^n(X,\Z)\; \longrightarrow\; \mathrm{Hom}(\pi_n(X), \Z)\;.
$$

\begin{proposition}\label{prp:cohom1}
Let $k\in\N$ be a positive integer. Then,
there are isomorphisms
\[
\begin{aligned}
H^2_{\Z_2}(\s{B}_{2k}| \s{B}_{2k}^\rho,\Z(1)) \;&
\overset{j^*}{\simeq} \;H^2_{\Z_2}(\s{B}_{2k},\Z(1))
\;\overset{f}{\simeq} \;H^2(\s{B}_{2k},\Z)\\
&\simeq\; \mathrm{Hom}(\pi_2(\s{B}_{2k}), \Z)\;\simeq\;\Z\;.
\end{aligned}
\]
The universal FKMM invariant is related to the (universal) first Chern class $\rr{c}_1$ by the formula
$$
f(j^*(\kappa_{\rm univ}))\;=\;\rr{c}_1
$$ 
and provides a basis of $H^2_{\Z_2}(\s{B}_{2k}| \s{B}_{2k}^\rho,\Z(1))$.
\end{proposition}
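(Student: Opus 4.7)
My plan is to prove the chain of isomorphisms first, and then identify $f(j^*(\kappa_{\rm univ}))$ with $\rr{c}_1$ using the specific construction of the FKMM invariant.

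\textbf{First isomorphism ($j^*$).} I would apply the long exact sequence \eqref{eq:Long_seq} of the pair $(\s{B}_{2k}, \s{B}_{2k}^\rho)$ with coefficients $\Z(1)$. The fixed locus $\s{B}_{2k}^\rho$ coincides with the quaternionic Grassmannian $B\n{Sp}(k)$, whose integer cohomology is torsion-free and concentrated in degrees divisible by four. Since the residual $\Z_2$-action on $\s{B}_{2k}^\rho$ is trivial, a K\"unneth/Leray--Serre computation with the coefficient ring $H^\bullet(B\Z_2, \Z(1))$ (which equals $\Z/2$ in positive odd degrees and zero otherwise) shows that $H^n_{\Z_2}(\s{B}_{2k}^\rho, \Z(1))$ vanishes in degree $n=2$ and consists only of a $\Z/2$-summand in degree $n=1$. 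One then checks that this $\Z/2$ is hit by the restriction from $H^1_{\Z_2}(\s{B}_{2k}, \Z(1))$, so that the connecting homomorphism in the long exact sequence is zero and $j^*$ becomes an isomorphism in degree two.

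\textbf{Second isomorphism ($f$).} The forgetful map $f$ fits into the long exact sequence associated with the short exact coefficient sequence $0 \to \Z(1) \to \Z[\Z_2] \to \Z \to 0$ (see \cite[Proposition 2.3]{gomi-15}), namely
\[
\cdots \to H^{n-1}_{\Z_2}(\s{B}_{2k}, \Z) \to H^n_{\Z_2}(\s{B}_{2k}, \Z(1)) \overset{f}{\to} H^n(\s{B}_{2k}, \Z) \to H^n_{\Z_2}(\s{B}_{2k}, \Z) \to \cdots
\]
By \eqref{eq:cohom_rin} the ring $H^\bullet(\s{B}_{2k}, \Z)$ is torsion-free and even-concentrated, so the Leray--Serre spectral sequence for the Borel fibration $\s{B}_{2k} \to \s{B}_{2k} \times_{\Z_2} E\Z_2 \to B\Z_2$ gives $H^1_{\Z_2}(\s{B}_{2k}, \Z)=0$ and shows that $H^2(\s{B}_{2k}, \Z) \to H^2_{\Z_2}(\s{B}_{2k}, \Z)$ is injective, forcing $f$ to be an isomorphism at degree two.

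\textbf{Third isomorphism and the value at $\kappa_{\rm univ}$.} Since $\s{B}_{2k}$ is simply connected, the universal coefficient theorem reduces $H^2(\s{B}_{2k}, \Z)$ to $\mathrm{Hom}(H_2(\s{B}_{2k}), \Z)$, and the Hurewicz isomorphism identifies $H_2(\s{B}_{2k}) \simeq \pi_2(\s{B}_{2k}) \simeq \Z$, generated (up to sign) by the class dual to $\rr{c}_1$. For the final identification, I would rely on the construction of the FKMM invariant for a ``Quaternionic'' bundle $\bb{E}$ in terms of the canonically trivialised ``Real'' determinant line bundle $\det \bb{E}$ over $X^\tau$: the relative class $\kappa(\bb{E}) \in H^2_{\Z_2}(X|X^\tau, \Z(1))$ encodes precisely this equivariant trivialisation, and forgetting first the pair structure (the map $j^*$) and then the ``Real'' equivariance (the map $f$) leaves the ordinary first Chern class $c_1(\det \bb{E}) = c_1(\bb{E})$. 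Specialising to $\bb{E} = \mathscr{T}^\infty_{2k}$ gives $f(j^*(\kappa_{\rm univ})) = \rr{c}_1$, and since the chain above identifies the target with $\Z$ generated by $\rr{c}_1$, the class $\kappa_{\rm univ}$ must itself generate $H^2_{\Z_2}(\s{B}_{2k}|\s{B}_{2k}^\rho, \Z(1))$.

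\textbf{Main obstacle.} The most delicate point will be the bookkeeping in the first long exact sequence: ruling out a possible $\Z/2$-contribution to $H^2_{\Z_2}(\s{B}_{2k}|\s{B}_{2k}^\rho, \Z(1))$ coming from the non-zero class in $H^1_{\Z_2}(\s{B}_{2k}^\rho, \Z(1)) \simeq \Z/2$ requires controlling the restriction map from the ambient space, and this depends on the detailed interaction between the $\rho$-action on $\s{B}_{2k}$ and the twisted coefficient system $\Z(1)$. By contrast, Steps 2 and 3 are essentially routine once standard Borel spectral-sequence calculations are carried out.
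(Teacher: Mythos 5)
Your route is genuinely different from the paper's: the paper simply quotes earlier results ([DG3, Propositions A.3, A.5 and 2.15(3)]) for the isomorphisms $f$ and $j^*$ and for the identity $f(j^*(\kappa_{\rm univ}))=\rr{c}_1$, whereas you attempt a self-contained proof via the two long exact sequences and the Borel spectral sequence. Your Step 1 and Step 3 are essentially sound: indeed $H^1_{\Z_2}(\s{B}_{2k}^\rho,\Z(1))\simeq\Z_2$ is generated by the pullback of the generator of $H^1_{\Z_2}(\{\ast\},\Z(1))$ (the constant map $-1$), which factors through $\s{B}_{2k}$, so the restriction in degree one is onto; together with $H^2_{\Z_2}(\s{B}_{2k}^\rho,\Z(1))=0$ this gives the $j^*$-isomorphism. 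So the point you flag as the ``main obstacle'' is in fact the easy one.

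The genuine gap is in Step 2. In the sequence $\cdots\to H^1_{\Z_2}(\s{B}_{2k},\Z)\to H^2_{\Z_2}(\s{B}_{2k},\Z(1))\stackrel{f}{\to}H^2(\s{B}_{2k},\Z)\stackrel{g}{\to} H^2_{\Z_2}(\s{B}_{2k},\Z)\to\cdots$, exactness gives $\mathrm{Im}(f)=\ker(g)$, so surjectivity of $f$ requires $g$ to be \emph{zero}; your claim that $g$ is \emph{injective} is the opposite of what is needed (it would force $f=0$). It is also false: the involution $\rho$ acts on $H^2(\s{B}_{2k},\Z)\simeq\Z\,\rr{c}_1$ by $\rr{c}_1\mapsto-\rr{c}_1$, so in the Borel spectral sequence $E_2^{0,2}=H^0(\Z_2,\widetilde{\Z})=0$ and $H^2_{\Z_2}(\s{B}_{2k},\Z)\simeq H^2(\Z_2,\Z)\simeq\Z_2$, which cannot receive an injection from $\Z$. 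This sign of $\rho^*$ on $\rr{c}_1$ is the crucial input your sketch never uses; it is precisely what makes $H^2_{\Z_2}(\s{B}_{2k},\Z(1))\simeq\Z$ while $H^2_{\Z_2}(\s{B}_{2k},\Z)\simeq\Z_2$. To close the gap you must show $g=0$, equivalently that $H^2_{\Z_2}(\s{B}_{2k},\Z)\to H^3_{\Z_2}(\s{B}_{2k},\Z(1))$ is injective; this follows, exactly as in the paper's Lemma \ref{lemma_tool2} for $\s{K}$, because $\s{B}_{2k}^\rho\neq\emptyset$ forces the bottom row $E_2^{p,0}$ of the twisted Borel spectral sequence to survive to $E_\infty$ (via the retraction onto a fixed point), so that $H^3_{\Z_2}(\s{B}_{2k},\Z(1))\simeq\Z_2$ surjects onto by a map between two copies of $\Z_2$ that must then be injective. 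As written, Step 2 does not establish surjectivity of $f$.
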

\begin{proof}
The isomorphism $f$ has been proved in \cite[Proposition A.3]{denittis-gomi-18-I}, while the isomorphism  $j^*$ has been established in 
\cite[Proposition A.5]{denittis-gomi-18-I}. The isomorphism 
$H^2(\s{B}_{2k},\Z)\simeq \Z$ follows from \eqref{eq:cohom_rin}. In particular $H^2(\s{B}_{2k},\Z)$ is generated by the (universal) first Chern class $\rr{c}_1$.
The fact that $\kappa_{\rm univ}$ is proven in 
\cite[Proposition 2.15 (3)]{denittis-gomi-18-I}, along with its relation with the first $\rr{c}_1$. Since $\pi_1(\s{B}_{2k})\simeq\Z$ one has that $H_1(\s{B}_{2k})\simeq\Z$ in view of the Hurewicz homomorphism,
and in turn $H^2(\s{B}_{2k},\Z)\simeq\mathrm{Hom}(\pi_2(\s{B}_{2k}), \Z)$ as a consequence of the universal coefficient theorem.
\end{proof}

\subsection{A relevant Eilenberg-Mac Lane space}\label{sec:cohom_eilMC}
This subsection is aimed to show that the Eilenberg-Mac Lane space 
\[
\s{K}\;:=\;K(0 \rightsquigarrow \tilde{\Z}, 2)
\]
has the same cohomological nature as the classifying space $\s{B}_{2k}$. Let us recall from Section \ref{sec:EMLS} that $\s{K}$ is a path connected space endowed with a $\Z_2$-action denoted with $\tau$, such that $\s{K}^\tau\neq\emptyset$ and
\begin{equation}\label{eq:Eil-MC-2}
\s{H}^2_{\n{Z}_2}(X, 0 \rightsquigarrow \tilde{\Z})\; \simeq\; [X, \s{K}]_{\n{Z}_2}\;,
\end{equation}
for any $\Z_2$-CW complex $X$.

\medskip

The relevant cohomology properties of $\s{K}$ are summarized in the following result which parallels the property of the space  $\s{B}_{2k}$ proved in Proposition \ref{prp:cohom1}.

\begin{proposition}\label{prop:k_isod}
There are isomorphisms
\begin{align*}
H^2_{\Z_2}(\s{K}| \s{K}^\tau,\Z(1))\; 
\overset{j^*}{\simeq}\; H^2_{\Z_2}(\s{K},\Z(1))
\;\overset{f}{\simeq}\; H^2(\s{K},\Z)
\;\simeq\; \mathrm{Hom}(\pi_2(\s{K}), \Z)\;\simeq\; \Z\;.
\end{align*}
\end{proposition}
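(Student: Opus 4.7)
The plan is to reduce each isomorphism in the chain to a Bredon calculation via Theorem \ref{teo_nat_iso} and the Eilenberg-Mac Lane representability \eqref{eq:Eil-MC-2}, and then to propagate the result through the long exact sequence of the pair $(\s{K},\s{K}^\tau)$ and through the forgetful map. Concretely, Theorem \ref{teo_nat_iso} identifies $H^2_{\Z_2}(\s{K}|\s{K}^\tau,\Z(1))$ with $\s{H}^2_{\Z_2}(\s{K},0\rightsquigarrow\widetilde{\Z})$, and \eqref{eq:Eil-MC-2} identifies the latter with $[\s{K},\s{K}]_{\Z_2}$. Since $\s{K}$ is a Bredon Eilenberg-Mac Lane space for $0\rightsquigarrow\widetilde{\Z}$, the standard obstruction argument for EML spaces gives $[\s{K},\s{K}]_{\Z_2}\simeq \mathrm{End}_{\mathtt{CoSy}_{\Z_2}}(0\rightsquigarrow\widetilde{\Z})$; by the explicit description in Example \ref{ex:coef_sysZ2}, such an endomorphism amounts to a $\Z_2$-equivariant homomorphism $\widetilde{\Z}\to\widetilde{\Z}$, hence multiplication by some integer, so this endomorphism group is $\Z$. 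Thus $H^2_{\Z_2}(\s{K}|\s{K}^\tau,\Z(1))\simeq\Z$ and the universal Bredon class $\iota$ corresponds to $[\mathrm{id}_\s{K}]$, i.e., to $1\in\Z$.

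Next, I would use that by the very definition of $\s{K}$ one has $\pi_n(\s{K}^\tau)=\underline{\pi}_n(\s{K})(Z_1)=0$ for all $n\geqslant 1$, so the base-point component of $\s{K}^\tau$ is weakly contractible. Hence $H^n_{\Z_2}(\s{K}^\tau,\Z(1))\simeq H^n(\Z_2,\widetilde{\Z})$, which equals $0$ for $n=0$ and for $n$ a positive even integer, and equals $\Z/2$ for $n$ odd. The long exact sequence of the pair $(\s{K},\s{K}^\tau)$ then takes the form
$$
\Z/2\;\longrightarrow\;\Z\;\overset{j^*}{\longrightarrow}\;H^2_{\Z_2}(\s{K},\Z(1))\;\longrightarrow\;0,
$$
and torsion-freeness of $\Z$ forces the first arrow to vanish. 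Consequently $j^*$ is an isomorphism and $H^2_{\Z_2}(\s{K},\Z(1))\simeq\Z$.

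Finally, forgetting the $\Z_2$-action, $\s{K}$ is simply connected and $\pi_2(\s{K})\simeq\Z$ is its only non-trivial homotopy group, so it has the homotopy type of $K(\Z,2)$; in particular $H^2(\s{K},\Z)\simeq\Z$, and the Hurewicz theorem together with the universal coefficient theorem yield $H^2(\s{K},\Z)\simeq\mathrm{Hom}(H_2(\s{K}),\Z)\simeq\mathrm{Hom}(\pi_2(\s{K}),\Z)\simeq\Z$. To see that $f$ is an isomorphism $\Z\to\Z$, I would track the universal Bredon class: under the various EML identifications, $f(j^*(\iota))$ corresponds to the homotopy class of the non-equivariant identity $\s{K}\to\s{K}\simeq K(\Z,2)$, which is a canonical generator of $H^2(\s{K},\Z)$, so $f\circ j^*$ is a generator-to-generator map $\Z\to\Z$; since $j^*$ is already known to be an isomorphism, so is $f$. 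The most delicate point is precisely this last bookkeeping: one must verify that the EML identification \eqref{eq:Eil-MC-2} is natural with respect to both $j^*$ (relative-to-absolute) and $f$ (forgetting equivariance) in such a way that the equivariant identity class really descends to the non-equivariant identity class.
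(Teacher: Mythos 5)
Your handling of the first three links in the chain is correct but follows a genuinely different route from the paper. The paper works on the Borel side throughout: Lemmas \ref{lemm:homot_typ} and \ref{lemma_tool1} identify $\s{K}\simeq \n{CP}^\infty$ non-equivariantly, Lemma \ref{lemma_tool2} computes $H^\bullet_{\Z_2}(\s{K},\Z)$ and $H^\bullet_{\Z_2}(\s{K},\Z(1))$ in low degree via the spectral sequence $E_2^{p,q}=H^p(\Z_2,H^q(\s{K},\Z))$, and Lemma \ref{lemma_tool3} gets $j^*$ from the contractibility of $\s{K}^\tau$ and the reduced-cohomology splitting \eqref{eq:app_red_cohom1}. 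You instead compute the \emph{relative} group first, as $[\s{K},\s{K}]_{\Z_2}\simeq\mathrm{End}_{\mathtt{CoSy}_{\Z_2}}(0\rightsquigarrow\widetilde{\Z})\simeq\Z$ via the equivariant Hopf/obstruction theorem, and then run the long exact sequence \eqref{eq:Long_seq} to get $j^*$ and the absolute group. That is legitimate and has the virtue of identifying $\iota$ with $1$ on the nose; you should, however, cite the equivariant Hopf classification explicitly and note that Theorem \ref{teo_nat_iso} is being applied to the infinite complex $\s{K}$ (the paper does the same for $\s{B}_{2k}$, so this is a shared, mild abuse).

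The genuine gap is the last step, the one you yourself flag. The assertion that ``$f(j^*(\iota))$ corresponds to the non-equivariant identity, hence is a generator of $H^2(\s{K},\Z)$'' is not an argument but a restatement of the claim: by Yoneda, the natural transformation $f\circ j^*$ agrees with the forgetful map $[X,\s{K}]_{\Z_2}\to[X,K(\Z,2)]$ \emph{if and only if} $f(j^*(\iota))$ is the canonical generator, so the ``tracking'' is circular. After your second step you only know that $f$ is a homomorphism $\Z\to\Z$, i.e.\ multiplication by some $m$; injectivity ($m\neq 0$) follows from $H^1_{\Z_2}(\s{K},\Z)=0$ together with the exact sequence of \cite[Proposition 2.3]{gomi-15}, but surjectivity ($m=\pm 1$) does not come for free. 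This is exactly where Lemma \ref{lemma_tool2} does its real work: it computes $H^2_{\Z_2}(\s{K},\Z)\simeq\Z_2$, $H^3_{\Z_2}(\s{K},\Z(1))\simeq\Z_2$ and $H^3(\s{K},\Z)=0$ from the Borel spectral sequence (using the fixed base point to control the relevant terms) and then reads off from Gomi's six-term exact sequence that the cokernel of $f$ vanishes. Note also that you cannot patch this by comparing with $\s{B}_{2k}$ via $f(j^*(\kappa_{\rm univ}))=\rr{c}_1$: the fact that $\varphi^*$ is an isomorphism on $H^2(-,\Z)$ is only established in Lemma \ref{lem:4_equivalence}, whose proof already invokes the present proposition. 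Some independent computation of $f$ — the paper's spectral-sequence argument, or an explicit test space on which $f\circ j^*$ is computable — is required to close the proof.
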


\medskip

This result is proved in several technical steps mimicking the strategy used
 in \cite[Appendix A]{denittis-gomi-18-I}. More precisely the proof is contained in Lemmas \ref{lemm:homot_typ}, \ref{lemma_tool1}, \ref{lemma_tool2} and \ref{lemma_tool3}.

\subsection{Bredon cohomology interpretation of the The FKMM invariant}
We are now in position to
carry out the  study of the FKMM invariant from the viewpoint of the Bredon cohomology. Recall that Theorem \ref{teo_nat_iso} establishes
the natural isomorphism
\[
H^2_{\Z_2}(X| X^\tau,\Z(1))\; \simeq\; \s{H}^2_{\Z_2}(X,0 \rightsquigarrow \widetilde{\Z})
\]
for any $\Z_2$-CW complex. We will freely use this isomorphism, and for that we can identify the universal FKMM invariant $\kappa_{\rm univ}$ with an element in the Bredon cohomology, \ie 
\[
\kappa_{\rm univ}\;\in\; \s{H}^2_{\Z_2}(\s{B}_{2k},0 \rightsquigarrow \widetilde{\Z})\;.
\]
{Let us recall the \emph{universal Bredon class} 
\[
\iota\; \in\; \s{H}^2_{\Z_2}(\s{K},0 \rightsquigarrow \widetilde{\Z})\;\simeq\;
H^2_{\Z_2}(\s{K}| \s{K}^\tau,\Z(1))\;\simeq\;\Z
\]
as defined in Section \ref{sec:EMLS}. In view of Remark \ref{rk:gen_univB} it follows that  $\iota$ is a generator.
By its nature, there exists a $\Z_2$-equivariant map
\begin{equation}\label{eq:eq_map}
\varphi\; :\; \s{B}_{2k} \;\longrightarrow\; \s{K}
\end{equation}
such that $\varphi^*(\iota) = \kappa_{\rm univ}$.}

\begin{lemma} \label{lem:4_equivalence}
Let $\varphi$ be the $\Z_2$-equivariant map \eqref{eq:eq_map}. Then,
for any integer $k\in\N$, it holds true that:
\begin{itemize}
\item[(i)]
The homomorphism $\varphi_* : \pi_n(\s{B}_{2k}) \to \pi_n(\s{K})$ is bijective for $n \le 3$ and  surjective for $n = 4$;
\vspace{1mm}
\item[(ii)]
The homomorphism $\varphi_* : \pi_n(\s{B}_{2k}^\rho) \to \pi_n(\s{K}^\tau)$ is bijective for $n \le 3$ and surjective for $n = 4$. 

\end{itemize}
\end{lemma}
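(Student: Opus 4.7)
The plan is to reduce the lemma to one substantive point---that $\varphi_*$ is an isomorphism on $\pi_2(\s{B}_{2k})$---since every other case will collapse to a trivial comparison of homotopy groups. First I would tabulate the relevant values. For $\s{B}_{2k}$, stability ($n\le 4k+1$) and Bott periodicity give $\pi_n(\s{B}_{2k})\simeq \pi_{n-1}(\n{U})$, so $\pi_0 = \pi_1 = \pi_3 = 0$ and $\pi_2(\s{B}_{2k}) \simeq \Z \simeq \pi_4(\s{B}_{2k})$. For $\s{K} = K(0 \rightsquigarrow \widetilde{\Z}, 2)$, the defining property of an Eilenberg-Mac Lane space yields $\pi_2(\s{K}) \simeq \Z$ and $\pi_n(\s{K}) = 0$ for $n \neq 2$; likewise $\pi_n(\s{K}^\tau) = 0$ for all $n$, because the coefficient system $0 \rightsquigarrow \widetilde{\Z}$ assigns the trivial group to the orbit $\Z_2/\Z_2$. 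This immediately settles (i) for $n=0,1,3$ (both sides vanish) and for $n=4$ (target vanishes). It also settles (ii) once $\s{B}_{2k}^\rho$ is identified with $B\n{Sp}(k)$: the fixed-point condition $Q\overline{\Sigma}=\Sigma$ endows $\Sigma$ with a quaternionic structure, so $\s{B}_{2k}^\rho$ is a quaternionic Grassmannian, and $\pi_n(\s{B}_{2k}^\rho)\simeq \pi_{n-1}(\n{Sp}(k))$ vanishes for $n\le 3$ and equals $\Z$ for $n=4$. The map $\varphi_*:\pi_n(\s{B}_{2k}^\rho)\to \pi_n(\s{K}^\tau)$ is then automatically bijective for $n\le 3$ and trivially surjective for $n=4$.

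It remains to show that $\varphi_*:\pi_2(\s{B}_{2k}) \to \pi_2(\s{K})$ is an isomorphism of $\Z$'s. My plan is to transfer to ordinary cohomology and check that $\varphi^*$ sends a generator to a generator. Both $\s{B}_{2k}$ and $\s{K}$ are simply connected, so Hurewicz together with the universal coefficient theorem yields a natural isomorphism $H^2(Y,\Z)\simeq \mathrm{Hom}(\pi_2(Y),\Z)$ for $Y\in\{\s{B}_{2k},\s{K}\}$. Naturality of the forgetful map $f$ and of the pair inclusion $j^*$ produces the commutative diagram
\[
\begin{array}{ccc}
H^2_{\Z_2}(\s{K}|\s{K}^\tau,\Z(1)) & \xrightarrow{\;\varphi^*\;} & H^2_{\Z_2}(\s{B}_{2k}|\s{B}_{2k}^\rho,\Z(1)) \\[1mm]
{\scriptstyle f\circ j^*}\downarrow\simeq & & \simeq\downarrow{\scriptstyle f\circ j^*} \\[1mm]
H^2(\s{K},\Z) & \xrightarrow{\;\varphi^*\;} & H^2(\s{B}_{2k},\Z),
\end{array}
\]
whose vertical arrows are isomorphisms onto $\Z$ by Propositions \ref{prp:cohom1} and \ref{prop:k_isod}. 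By construction $\varphi^*(\iota)=\kappa_{\rm univ}$; by Remark \ref{rk:gen_univB} combined with Theorem \ref{teo_nat_iso}, $\iota$ is a generator of the upper-left group, and by Proposition \ref{prp:cohom1}, $f\circ j^*(\kappa_{\rm univ})=\rr{c}_1$ is a generator of the lower-right group. Hence the bottom $\varphi^*$ sends a generator to a generator, so it is an isomorphism $\Z\to\Z$. Dualizing back through Hurewicz/UCT---and using that a homomorphism $\Z\to\Z$ is an isomorphism if and only if its $\mathrm{Hom}(-,\Z)$-dual is---produces the conclusion.

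The only place where content genuinely enters is this last step: aligning the generators on the two sides so that $\varphi^*$ acts as an isomorphism on $H^2(\cdot,\Z)$. Once Propositions \ref{prp:cohom1} and \ref{prop:k_isod}, together with the generator statement of Remark \ref{rk:gen_univB}, are in place, the rest of the argument reduces to routine bookkeeping with known homotopy groups of classical classifying spaces.
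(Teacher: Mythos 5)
Your proposal is correct and follows essentially the same route as the paper: the homotopy-group tables dispose of all cases except $n=2$ of (i), and that case is settled by the commutative square relating $H^2_{\Z_2}(\cdot\,|\,\cdot^{\mathrm{fix}},\Z(1))$ to $\mathrm{Hom}(\pi_2(\cdot),\Z)$ via Propositions \ref{prp:cohom1} and \ref{prop:k_isod}, together with the fact that $\varphi^*$ carries the generator $\iota$ to the generator $\kappa_{\rm univ}$, so the dual of $\varphi_*$ on $\pi_2$ is an isomorphism $\Z\to\Z$. Your extra remark that a map $\Z\to\Z$ is invertible iff its $\mathrm{Hom}(-,\Z)$-dual is makes explicit a step the paper leaves implicit, but the argument is the same.
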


\begin{proof}
(i) The homotopy of the Eilenberg-Mac Lane space $\s{K}$ is given by \eqref{eq_homot}. Moreover, it is known 
 that  $\pi_n(\s{B}_{2k}) \simeq \pi_{n-1}(\n{U}(2k))$ \cite[eq. A.3]{denittis-gomi-18-III}.  In summary, one has that
\[
\begin{array}{|c||c|c|c|c|c|}
\hline
& n = 0 & n = 1 & n = 2 & n = 3 & n = 4 \\
\hline
\hline
\pi_n(\s{B}_{2k}) & 0 & 0 & \Z & 0 & \Z \\
\hline
\pi_n(\s{K}) & 0 & 0 & \Z & 0 & 0 \\
\hline
\end{array}
\]
Therefore, the claim  (i) is evident for $n \neq 2$. For $n = 2$, in view of Propositions \ref{prp:cohom1} and \ref{prop:k_isod},
one has the commutative diagram
\[
\begin{array}{c@{}c@{}c@{}c@{}c@{}c@{}c}
H^2_{\Z_2}(\s{B}_{2k}| \s{B}_{2k}^\rho,\Z(1)) & \; \simeq \; &\;
\mathrm{Hom}(\pi_2(\s{B}_{2k}), \Z)\;\simeq\;\Z \\
\uparrow & & \uparrow  \\
H^2_{\Z_2}(\s{K}| \s{K}^\tau,\Z(1)) &\; \simeq\; &
\mathrm{Hom}(\pi_2(\s{K}), \Z)\;\simeq\;\Z
\end{array}
\]
where all  vertical maps are induced from the equivariant map $\varphi$. Because $\varphi^*$ relates the generators $\iota$ and $\kappa_{\rm univ}$, it follows that the left most homomorphism
\[
\varphi^* \;:\; \underbrace{H^2_{\Z_2}(\s{K}| \s{K}^\tau,\Z(1))}_{\Z}\; \longrightarrow\; 
\underbrace{H^2_{\Z_2}(\s{B}_{2k}| \s{B}_{2k}^\rho,\Z(1))}_{\Z}
\]
is indeed an isomorphism. Therefore also the right most homomorphism, which is induced from $\varphi_* : \pi_2(\s{B}_{2k}) \to \pi_2(\s{K})$, must be an isomorphism. This implies that $\pi_2(\s{B}_{2k})\simeq \pi_2(\s{K})$.\\
(ii) By design, $\s{K}^\tau \neq \emptyset$, and equation \eqref{eq_homot00} provides $\pi_n(\s{K}^\tau) = 0$ for all $n$. On the other hand 
\[
\s{B}_{2k}^\rho\;\simeq\;  {\rm Gr}_{k}(\n{H}^\infty)\; =\; 
\varinjlim_n
{\rm Sp}(n)/({\rm Sp}(k) \times {\rm Sp}(n-k))
\]
 where $\n{H}$ denotes the (non-commutative) field of quaternions and ${\rm Sp}(k)$ is the compact symplectic group. Therefore, from the the homotopy exact sequence one obtains that $\pi_n(\s{B}_{2k}^\rho)\simeq\pi_{n-1}({\rm Sp}(k))$.  The homotopy groups of $\s{B}_{2k}^\rho$ and $\s{K}^\tau$ in low degrees are shown in the following table:
\[
\begin{array}{|c||c|c|c|c|c|}
\hline
& n = 0 & n = 1 & n = 2 & n = 3 & n = 4 \\
\hline
\pi_n(\s{B}_{2k}^\rho) & 0 & 0 & 0 & 0 & \Z \\
\hline
\pi_n(\s{K}^\tau) & 0 & 0 & 0 & 0 & 0 \\
\hline
\end{array}
\]
The proof follows directly from the values listed above. 
\end{proof}

\medskip

We are now in position to prove our main result.

\begin{theorem} \label{thm:main_in_the_body}
Let $k\in\N$ be a positive integer. Then, the FKMM invariant
$$
\kappa \;:\; \mathrm{Vect}^{2k}_{\rr{Q}}(X)\;\longrightarrow\; H^2_{\Z_2}(X| X^\tau,\Z(1))
$$
is bijective for any $\Z_2$-CW complex $X$ of dimension $d\leqslant 3$.
\end{theorem}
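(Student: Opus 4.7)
The plan is to reformulate the theorem as a statement about $\Z_2$-equivariant homotopy classes of maps, and then deduce it from the equivariant Whitehead theorem with Lemma \ref{lem:4_equivalence} as the input. First, both sides of the FKMM map are representable. On the left, $\s{B}_{2k}$ is a classifying $\Z_2$-space for rank-$2k$ ``Quaternionic'' vector bundles by \cite[Theorem 2.4]{denittis-gomi-14-gen}, giving a natural bijection
\[
\mathrm{Vect}^{2k}_{\rr{Q}}(X) \;\simeq\; [X, \s{B}_{2k}]_{\Z_2},
\]
under which a bundle $\bb{E}$ corresponds to the class of a classifying map $f_{\bb{E}} : X \to \s{B}_{2k}$ with $f_{\bb{E}}^{\ast} \mathscr{T}^{\infty}_{2k} \simeq \bb{E}$. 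On the right, Theorem \ref{teo_nat_iso} combined with the Eilenberg--Mac~Lane representation \eqref{eq:Eil-MC-2} yields
\[
H^2_{\Z_2}(X| X^\tau, \Z(1)) \;\simeq\; \s{H}^2_{\Z_2}(X, 0 \rightsquigarrow \widetilde{\Z}) \;\simeq\; [X, \s{K}]_{\Z_2}.
\]

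Next, I would verify that under these identifications the FKMM map $\kappa$ is implemented by post-composition with the equivariant map $\varphi : \s{B}_{2k} \to \s{K}$ of \eqref{eq:eq_map}. By naturality of the FKMM invariant and the defining property $\varphi^{\ast}\iota = \kappa_{\rm univ}$,
\[
\kappa(\bb{E}) \;=\; \kappa\bigl(f_{\bb{E}}^{\ast} \mathscr{T}^{\infty}_{2k}\bigr) \;=\; f_{\bb{E}}^{\ast} \kappa_{\rm univ} \;=\; f_{\bb{E}}^{\ast}\varphi^{\ast} \iota \;=\; (\varphi \circ f_{\bb{E}})^{\ast} \iota,
\]
so that, modulo the above identifications, $\kappa$ is exactly the induced map
\[
\varphi_{\ast} : [X, \s{B}_{2k}]_{\Z_2} \longrightarrow [X, \s{K}]_{\Z_2}, \qquad [f] \longmapsto [\varphi \circ f].
\]

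Finally, I would invoke the $\Z_2$-equivariant Whitehead theorem supplied by equivariant obstruction theory \cite[Chapter II]{Bre}: if a $\Z_2$-equivariant map $F : Y \to Z$ is such that $F^{\n{H}} : Y^{\n{H}} \to Z^{\n{H}}$ is an $n$-equivalence for every closed subgroup $\n{H} \subseteq \Z_2$, then $F_{\ast} : [X, Y]_{\Z_2} \to [X, Z]_{\Z_2}$ is bijective for every $\Z_2$-CW complex $X$ of dimension $d < n$ and surjective for $d = n$. Lemma \ref{lem:4_equivalence} verifies exactly this hypothesis for $\varphi$ with $n = 4$: part (i) gives the 4-equivalence on the total spaces (subgroup $\{+1\}$) and part (ii) on the fixed-point sets (subgroup $\Z_2$). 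Consequently $\varphi_{\ast}$ is a bijection for every $\Z_2$-CW complex of dimension $d \leqslant 3$, which is the desired conclusion.

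The main conceptual obstacle has already been absorbed into Lemma \ref{lem:4_equivalence}, whose proof depends on the cohomological comparison between $\s{B}_{2k}$ and $\s{K}$ developed in Propositions \ref{prp:cohom1} and \ref{prop:k_isod}. Once this input is granted, the only further point to check carefully is the representability of the FKMM invariant as $\varphi_{\ast}$ (which is just naturality together with $\varphi^{\ast}\iota = \kappa_{\rm univ}$), after which the $\Z_2$-equivariant Whitehead theorem, applied to the two-element lattice of subgroups of $\Z_2$, finishes the argument.
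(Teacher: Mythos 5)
Your proposal is correct and follows essentially the same route as the paper's own proof: identify both sides with equivariant homotopy sets $[X,\s{B}_{2k}]_{\Z_2}$ and $[X,\s{K}]_{\Z_2}$, check via naturality and $\varphi^{\ast}\iota=\kappa_{\rm univ}$ that $\kappa$ becomes $\varphi_{\ast}$, and conclude by the equivariant Whitehead theorem applied through Lemma \ref{lem:4_equivalence}. The only cosmetic difference is that you cite the Whitehead theorem via Bredon's obstruction theory while the paper cites \cite[Theorem 3.2]{May}.
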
\begin{proof}
For any $\Z_2$-CW complex $X$, the equivariant map \eqref{eq:eq_map} induces a natural map
\[
\varphi_* \;:\; [X, \s{B}_{2k}]_{\Z_2}\; \longrightarrow\; [X,\s{K}]_{\Z_2}\;.
\]
By the Whitehead theorem in equivariant homotopy theory \cite[Theorem 3.2]{May} and Lemma \ref{lem:4_equivalence}, the above map is bijective if the dimension of $X$ is less than or equal to three. Since $\s{B}_{2k}$ classifies ``Quaternionic'' vector bundles of rank $2k$, we have
\[
[X, \s{B}_{2k}]_{\Z_2}\; \cong\; \mathrm{Vect}^{2k}_{\rr{Q}}(X)\;,
\]
while $\s{K}$ represents the Bredon cohomology
\[
[X, \s{K}]_{\Z_2}\; \simeq\; 
\s{H}^n_{\Z_2}(X, 0 \rightsquigarrow \widetilde{\Z}) \;\simeq\; H^2_{\Z_2}(X|X^\tau,\Z(1))\;.
\]
It remains to verify that the composition of these bijections
\[
\mathrm{Vect}^{2k}_{\rr{Q}}(X)\; \simeq\;
[X, \s{B}_{2k}]_{\Z_2}\; \overset{\varphi_*}{\simeq}\;
[X, \s{K}]_{\Z_2}\; \simeq\;
H^2_{\Z_2}(X| X^\tau,\Z(1))
\]
is the FKMM invariant. Suppose that $\bb{E} \to X$ is a ``Quaternionic'' vector bundle of rank $2k$. Then there is an equivariant map $f : X \to \s{B}_{2k}$ such that $f^*\mathscr{T}^\infty_{2k} \simeq \bb{E}$. The assignment $[\bb{E}] \mapsto [f]$ realizes the first bijection. Composing $\varphi$, we get an equivariant map $\varphi \circ f : X \to \s{K}$. The assignment $[f] \mapsto [\varphi \circ f]$ realizes the second bijection. Finally, by the third bijection, we get a cohomology class $(\varphi \circ f)^*(\iota) \in H^2_{\Z_2}(X| X^\tau,\Z(1))$. By design, one has
\[
(\varphi \circ f)^*(\iota)\;
=\; f^*(\varphi^*(\iota))
\;=\; f^*(\kappa_{\rm univ})\;=\;f^*(\kappa(\mathscr{T}^\infty_{2k}))
\;=\; \kappa(f^*\mathscr{T}^\infty_{2k})
\;=\; \kappa(\bb{E})\;,
\]
where the second-last equality is justified by the naturality of $\kappa$.
\end{proof}

\medskip

\begin{remark}
By the Whitehead theorem in equivariant homotopy theory, the map $\varphi_* : [X, \s{B}_{2k}]_{\Z_2} \to [X, \s{K}]_{\Z_2}$ is surjective for any $\Z_2$-CW complex of dimension $d\leqslant 4$. It follows that the FKMM invariant $\kappa : \mathrm{Vect}^{2k}_Q(X) \to H^2_{\Z_2}(X| X^\tau,\Z(1))$ is also surjective in dimension $d=4$.
\hfill $\blacktriangleleft$
\end{remark}

\section{Equivariant cohomology of lens space}
\label{sec:lens}
The main aim of  this part is to clarify the wrong part in \cite[Section 5]{denittis-gomi-18-II} and provide corrected claims. 

\subsection{Setup and relevant results}\label{sec_setupG}
The three dimensional sphere can be parametrized as the unit sphere in $\C^2$,
\begin{equation}\label{eq:sphere_complex}
\n{S}^3\;\equiv\;\big\{(z_0,z_1)\in\C^2\ |\ |z_0|^2+|z_1|^2=1  \big\}\;\subset\;\C^2\;.
\end{equation}
This representation allows 
 $u\in\n{U}(1)$ to act (on the left) on $\n{S}^3$ through the mapping $(z_0,z_1)\mapsto (uz_0,uz_1)$. 
This action of $\n{U}(1)$ on  $\n{S}^3$ is evidently \emph{free}.
 The inclusion of $\Z_p\subset \n{U}(1)$, given by the fact that $\Z_p$ can be identified with the set of the $p$-th roots of the unity, implies that one can  define a free action of every cyclic group $\Z_p$ on $\n{S}^3$. More precisely we can let
 $k\in\Z_p$ act on $\n{S}^3$ through the rotation 
 \[
 k\;:\;(z_0,z_1)\longmapsto \left(\expo{\ii2\pi\frac{k}{p}}z_0,\expo{\ii2\pi\frac{k}{p}}z_1\right)\;.
 \] 
 The quotient space
 \[
 L_p\;:=\;\n{S}^3/\Z_p
 \]
is called the (three-dimensional) \emph{lens space} (see \cite[Example 18.5]{bott-tu-82} or \cite[Example 2.43]{hatcher-02} for more details) and sometime is denoted with the symbol $L(1;p)$. 
By combining the facts that $\n{S}^3$ is simply connected and the $\Z_p$-action on $\n{S}^3$
is free one concludes that $\n{S}^3$ is the {universal cover} of $L_p$.

\medskip

The parametrization \eqref{eq:sphere_complex}
allows to equip 
$\n{S}^3\subset\C^2$
with the involution 
 induced by the complex conjugation 
$(z_0,z_1)\mapsto(\overline{z_0},\overline{z_1})$.
The computation
 \[
 \expo{\ii2\pi\frac{p-k}{p}}\;=\;\expo{-\ii2\pi\frac{k}{p}}\;=\;\overline{\expo{\ii2\pi\frac{k}{p}}}\;,\qquad\quad k\in\Z_p\;
 \]
shows that 
 $\n{Z}_p\subset\n{U}(1)$ 
 is preserved by the complex conjugation.
Therefore, the involution on $\n{S}^3$ descends to an involution $\tau$ on
 $L_p$.  
  The involutive space $(L_p,\tau)$ inherits the structure of a smooth (three-dimensional) manifold with a smooth involution, hence it admits a $\Z_2$-CW-complex structure \cite[Theorem 3.6]{May}. Let us point out that it is possible
to think of  $L_p\to \C P^1$ as a \virg{Real}
principal $\n{U}(1)$-bundle where the \virg{Real} structure on the total space is provided by $\tau$, and the
 involution $\tau'$ on the base space $\C P^1$ is still given by the complex conjugation $\tau':[z_0,z_1]\mapsto[\overline{z_0},\overline{z_1}]$.
 
 \medskip

 Henceforth,
 let us focus now on the even case $p=2q > 0$.
 
 \medskip
 
 \noindent
 {\bf Fixed point set.} As proved in  \cite[Lemma 5.1]{denittis-gomi-18-II}
the fixed point set of $L_{2q}$ has the form
\[
L_{2q}^\tau\;=\;S_0\;\sqcup\;S_1\;\simeq\;\n{S}^1\;\sqcup\;\n{S}^1
\]
where
\begin{equation}\label{eq:conne_comp}
\begin{aligned}
S_0\;&:=\;\left.\left\{\big[\cos\theta,\sin\theta\big]\in L_{2q}\ \right|\ \theta\in\R\right\}\;,\\
S_1\;&:=\;\left.\left\{\left[\expo{-\ii\frac{\pi}{2q}}\cos\theta,\expo{-\ii\frac{\pi}{2q}}\sin\theta\right]\in L_{2q}\ \right|\ \theta\in\R\right\}\;.\\
\end{aligned}
\end{equation}
and $S_0\simeq S_1\simeq \n{S}^1$. In particular it is the disjoint union of two connected components.

 \medskip
 
 \noindent
 {\bf Equivariant Borel cohomology.} As proved in \cite[Section 5.2]{denittis-gomi-18-II} (see in particular Tables 2 and 4), one has that
\begin{equation}\label{eq:cohom:B_lens1}
{H}^1_{\Z_2}(L_{2q},\Z(1))\;\simeq\;\Z_2
\end{equation}
and
\begin{equation}\label{eq:cohom:B_lens2}
{H}^1_{\Z_2}(L_{2q}^\tau,\Z(1))\;=\;{H}^1_{\Z_2}(S_0,\Z(1))\;\oplus\;{H}^1_{\Z_2}(S_1,\Z(1))\;\simeq\; \Z_2\;\oplus\;\Z_2\;.
\end{equation}
In view of the first isomorphism in \eqref{eq:iso:eq_cohom}, the generators of these two groups can be represented by equivariant maps. More precisely the generator in \eqref{eq:cohom:B_lens1} is represented by the constant map on $L_{2q}$ with value $-1$.
Similarly, the two generators in \eqref{eq:cohom:B_lens2} are represented by the two constant maps on $S_0\sqcup S_1$ 
which take the value $+1$ on one connected component and $ -1 $ on the other.

 \medskip
 
 \noindent
 {\bf ``Real'' line bundles.} From \cite[Remark 5.2]{denittis-gomi-18-II}
 we know that  the Picard group of $L_{2q}$ and the 
\virg{Real} Picard group of the involutive space $(L_{2q},\tau)$ coincide and are given by
\[
{\rm Pic}_{\rr{R}}(L_{2q},\tau)\;\stackrel{c^{\rr{R}}_1}{\simeq}\;H^2_{\Z_2}(L_{2q},\Z(1))
\;\simeq\;\Z_{2q}\;\simeq\;
H^2_{\Z_2}(X,\Z) \;\stackrel{c_1}{\simeq}\;{\rm Pic}(L_{2q})\;.
\]
Therefore, there are only $2q$ complex line bundles over  $L_{2q}$ (up to isomorphisms), and each one of these  can be endowed with a unique (up to isomorphisms)  ``Real''  structure (\cf eq. \eqref{eq:iso:eq_cohom}). The representatives of these line bundles can be constructed explicitly. For $k\in\Z$, we let $u\in\Z_{2q}$ act on $\n{S}^3\times\C$ by $((z_0,z_1),\lambda)\mapsto ((uz_0,uz_1),\overline{u}^k\lambda)$.
Since the action is free on the base space the quotient defines a complex line bundle  
$\bb{L}_k\to L_{2q}$ (\cf \cite[Proposition 1.6.1]{atiyah-67}). From the construction it results evident that $\bb{L}_k=\bb{L}_{k+2q}$ and $\bb{L}_0=L_{2q}\times \C$ is the trivial line bundle. Moreover, $\bb{L}_1$ provides a basis for ${\rm Pic}(L_{2q})\simeq\Z_{2q}$ in view of the fact that ${\bb{L}_1}^{\otimes k}\simeq \bb{L}_k$. The  ``Real'' structure on $\bb{L}_k$ is evidently induced by the complex conjugation $\Theta:[(z_0,z_1),\lambda]\mapsto[\tau{(z_0,z_1)},\overline{\lambda}]=[\overline{(z_0,z_1)},\overline{\lambda}]$.
 The isomorphism class of $\bb{L}_1$ as a ``Real'' line bundle provides a generator for $H^2_{\Z_2}(L_{2q},\Z(1))$.

\subsection{Restricted ``Real" line bundles}
In \cite[Section]{denittis-gomi-18-II}, the restriction of the ``Real'' line bundle $\bb{L}_1\to L_{2q}$ to the fixed point set $L_{2q}^\tau = S_0 \sqcup S_1$ is studied. In particular in  \cite[Lemma 5.3]{denittis-gomi-18-II} it is claimed that the restrictions $\bb{L}_1|_{S_0}$ and $\bb{L}_1|_{S_1}$ are trivial. However, this result is \emph{wrong}. The argument used in the proof of \cite[Lemma 5.3]{denittis-gomi-18-II} is based on the construction of two nowhere vanishing invariant sections $s_j : S_j \to \mathscr{L}_1|_{S_j}$, $j=1,2$, but the constructed sections are actually ill-defined. The correct claim is contained in the following result.
\begin{lemma} \label{lem:restriction}
Both ``Real'' line bundles $\mathscr{L}|_{S_0}$ and $\mathscr{L}|_{S_1}$
 are non-trivial.
\end{lemma}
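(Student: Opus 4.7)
My plan is to analyse the real subbundle $(\bb{L}_1|_{S_j})^{\Theta}$ of the ``Real'' line bundle $\bb{L}_1|_{S_j}$ for $j = 0, 1$. Since the involution $\tau$ is trivial on the fixed set $S_j$, the restriction $\bb{L}_1|_{S_j}$ is a complex line bundle equipped with an anti-linear involution covering the identity, so its $\Theta$-fixed locus is a real line bundle over $S_j \simeq \n{S}^1$. If this real subbundle turns out to be the (non-trivial) Möbius bundle, then $\bb{L}_1|_{S_j}$ cannot be the trivial ``Real'' line bundle (whose real part is the trivial $\R$-bundle), which settles the claim.

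The concrete computation proceeds by lifting to the universal cover $\n{S}^3$. For each $j$ I would choose a connected component $\widetilde{S}_j \subset \n{S}^3$ of the preimage of $S_j$, namely $\widetilde{S}_0 = \{(\cos\theta, \sin\theta) : \theta \in \R\}$ and $\widetilde{S}_1 = \expo{-\ii\pi/(2q)}\cdot\widetilde{S}_0$. A short check shows that each $\widetilde{S}_j$ is stabilized (as a set) only by $\{\pm 1\} \subset \Z_{2q}$, so $\widetilde{S}_j \to S_j$ is a double cover with deck transformation induced by $-1 \in \Z_{2q}$, acting as $\theta \mapsto \theta + \pi$. Pulling back $\bb{L}_1 = \n{S}^3 \times_{\Z_{2q}} \C$ along this cover yields a trivialization $\widetilde{S}_j \times \C$ in which the deck transformation acts on the fibre by $\lambda \mapsto \overline{-1}\,\lambda = -\lambda$. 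Next, I would express $\Theta$ in this trivialization: for $(z_0, z_1) \in \widetilde{S}_j$ one has $\overline{(z_0, z_1)} = u_j \cdot (z_0, z_1)$ with $u_0 = 1$ and $u_1 = \expo{\ii\pi/q}$, and, using $[(uz, \mu)] = [(z, u\mu)]$ in the quotient, $\Theta$ transports to the anti-linear fibre involution $\lambda \mapsto u_j \overline{\lambda}$, whose fixed real line is $\R\cdot \sqrt{u_j}$ with $\sqrt{u_0} = 1$ and $\sqrt{u_1} = \expo{\ii\pi/(2q)}$.

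Combining the two facts, the real subbundle is trivialized over $\widetilde{S}_j$ by the section $\theta \mapsto (\theta, \sqrt{u_j})$, and the deck transformation sends this section to $(\theta + \pi, -\sqrt{u_j})$. The quotient over $S_j$ is therefore $(\widetilde{S}_j \times \R)/\langle (\theta, r) \sim (\theta + \pi, -r)\rangle$, precisely the Möbius bundle, as desired. I expect the main subtle point---and precisely the source of the erroneous argument in \cite[Lemma~5.3]{denittis-gomi-18-II}---to be the case of $S_1$: the preimage points are not themselves fixed by complex conjugation but only differ from their conjugates by the non-trivial element $u_1 \in \Z_{2q}$, and this factor must be carefully absorbed into the fibre coordinate before the real subspace and the deck transformation can be compared. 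Once this is done correctly, both restrictions are non-trivial for the same reason.
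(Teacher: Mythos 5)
Your proof is correct, and it takes a route that is close to, but not identical with, the paper's. The paper stays downstairs on $S_j$: it writes an explicit nowhere-vanishing section $s_j$ of the complex line bundle $\bb{L}_1|_{S_j}$ (the half-angle parametrization there encodes the same double cover $\widetilde{S}_j\to S_j$ you use), computes $\Theta\circ s_0=(s_0\circ\tau)\cdot\expo{\ii\theta}$, and thereby identifies $\bb{L}_1|_{S_j}$ with the product bundle $\n{S}^1\times\C$ carrying the twisted \virg{Real} structure $(\theta,\lambda)\mapsto(\theta,\expo{\ii\theta}\overline{\lambda})$, whose non-triviality is then invoked. You never produce a global section over $S_j$; instead you trivialize upstairs on $\widetilde{S}_j$, keep the two pieces of data separate (the deck transformation acting on the fibre by $-1$, and $\Theta$ acting by $\lambda\mapsto u_j\overline{\lambda}$), and read off the $\Theta$-fixed real subbundle as $(\widetilde{S}_j\times\R)/\bigl((\theta,r)\sim(\theta+\pi,-r)\bigr)$, the M\"obius bundle. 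The two detections agree --- the real form of the paper's twisted model is precisely the M\"obius bundle --- but your version makes the final step self-contained, since the M\"obius bundle is visibly non-trivial and the real form of the trivial \virg{Real} bundle is visibly trivial, whereas the paper ends by appealing to the known non-triviality of the twisted product. All your intermediate steps check out: the stabilizer of $\widetilde{S}_j$ in $\Z_{2q}$ is $\{\pm1\}$, the relation $[(uz,\mu)]=[(z,u\mu)]$ is the correct one for $\bb{L}_1$, $u_1=\expo{\ii\pi/q}$ does lie in $\Z_{2q}$, and the only point worth making explicit is the (easy) direction of the equivalence you use at the outset, namely that a trivial \virg{Real} line bundle over a trivially-involuted base has trivial real form.
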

\begin{proof}
Let us start with an equivalent, but more  ``appropriate'', description of the connected components $S_0$ and $S_1$ of the fixed point set $L_{2q}^\tau$ given by
\begin{equation}\label{eq:conne_comp2}
\begin{aligned}
S_0\;&:=\;\left.\left\{\left[\cos\frac{\theta}{2},\sin\frac{\theta}{2}\right]\in L_{2q}\ \right|\ \theta\in\R/2\pi\Z\right\}\;,\\
S_1\;&:=\;\left.\left\{\left[\expo{-\ii\frac{\pi}{2q}}\cos\frac{\theta}{2},\expo{-\ii\frac{\pi}{2q}}\sin\frac{\theta}{2}\right]\in L_{2q}\ \right|\ \theta\in\R/2\pi\Z\right\}\;.\\
\end{aligned}
\end{equation}
The comparison with \eqref{eq:conne_comp} is obtained by observing that the action of $q\in\Z_{2q}$ on $\n{S}^3$ is obtained by multiplying 
the component with $\expo{\ii2\pi\frac{q}{2q}}=-1$, \ie by $(z_0,z_1)\mapsto(-z_0,-z_1)$. Therefore, it is sufficient to parametrize the components  $S_0$ and $S_1$ only with angles contained in the range $[0,\pi)$, or equivalently by using the parameter $\theta/2$.
In view of \eqref{eq:conne_comp2}, one can identify 
$S_0$ and $S_1$ with $\n{S}^1 \simeq \R/2\pi \Z$. Two nowhere-vanishing sections $s_j : S_j \to \mathscr{L}_1|_{S_j}$, for $j = 0, 1$, can be defined  by
\begin{align*}
s_0\left(\left[\cos \frac{\theta}{2}, \sin \frac{\theta}{2}\right]\right) 
&\;=\; \left[\left(\cos \frac{\theta}{2}, \sin \frac{\theta}{2}\right), \expo{-\ii \frac{\theta}{2}}\right]\;, \\
s_1\left(\left[\expo{-\ii\frac{\pi}{2q}}\cos \frac{\theta}{2}, 
\expo{-\ii\frac{\pi}{2q}}\sin \frac{\theta}{2}\right]\right) 
&\;=\;
\left[\left(\expo{-\ii\frac{\pi}{2q}}\cos \frac{\theta}{2}, 
\expo{-\ii\frac{\pi}{2q}}\sin \frac{\theta}{2}\right),\expo{\ii\frac{\pi}{2q}}\expo{-\ii \frac{\theta}{2}}\right] \;,
\end{align*}
for every $\theta\in \R/2\pi \Z$. A direct computation leads to
\begin{align*}
\Theta\left(s_0\left(\left[\cos \frac{\theta}{2}, \sin \frac{\theta}{2}\right]\right) \right) 
&=
s_0\left(\tau\left(\left[\cos \frac{\theta}{2}, \sin \frac{\theta}{2}\right]\right)\right)\cdot
\expo{\ii\theta}\;, \\
\end{align*}
where $\Theta$ is the ``Real'' structure on $\bb{L}_1$ and
the notation $\cdot
\expo{\ii\theta}$ on the right denotes the multiplication by the phase $\expo{\ii\theta}$ only on the fiber. A similar result holds also for the section $s_1$ as it can be checked by a straightforward computation.
As a consequence the two nowhere-vanishing sections $s_0$ and $s_1$ are not ``Real'' since $\Theta\circ s_j\neq s_j\circ \tau$. However, each  section $s_j$ induces an isomorphism between $\mathscr{L}_1|_{S_j}\to S_j$ and the product line bundle $\n{S}^1 \times \C\to \n{S}^1$ (on the circle $\n{S}^1 \simeq \R/2\pi\Z$)  with  ``Real'' structure $(\theta, \lambda) \mapsto (\theta, \expo{\ii\theta}\overline{\lambda})$. The latter ``Real'' line bundle is non-trivial, and this 
concludes the proof. 
\end{proof}

\subsection{``Quaternionic" structures}\label{sec:Q_str_len}
The different ``Quaternionic" structures on $(L_{2q},\tau)$ have been constructed in \cite[Proposition 5.7]{denittis-gomi-18-II}. As a result one has that classification of  equivalence classes of rank $2m$ ``Quaternionic" vector bundles over $(L_{2q},\tau)$ is given by
$$
{\rm Vec}_{\rr{Q}}^{2m}\big(L_{2q},\tau\big)\;\simeq\;\Z_{2q}\;,\qquad\quad \forall\ m\in\N\;. 
$$
In view of Theorem \ref{thm:main_in_the_body}, this fact must imply
$H^2_{\Z_2}(L_{2q}| L_{2q}^\tau,\Z(1))\simeq \Z_{2q}$. In fact this is the correct result, while the claim in  \cite[Proposition 5.4]{denittis-gomi-18-II} turns out to be \emph{wrong} as a consequence of the incorrectness of \cite[Lemma 5.3]{denittis-gomi-18-II}. 

\medskip

In order  to compute directly $H^2_{\Z_2}(L_{2q}| L_{2q}^\tau,\Z(1))$, let us make use of the exact sequence \eqref{eq:Long_seq}.  One gets that
\[
\begin{aligned}
\underbrace{H^1_{\Z_2}(L_{2q},\Z(1))}_{\Z_2} \overset{i^*}{\to}
\underbrace{H^1_{\Z_2}(L_{2q}^\tau,\Z(1))}_{\Z_2 \oplus \Z_2}& \to
H^2_{\Z_2}(L_{2q}| L_{2q}^\tau,\Z(1)) \to\\
& \to
\underbrace{H^2_{\Z_2}(L_{2q},\Z(1))}_{\Z_{2q}} \overset{i^*}{\to}
\underbrace{H^2_{\Z_2}(L_{2q}^\tau,\Z(1))}_{\Z_2 \oplus \Z_2},
\end{aligned}
\]
where the values of the cohomology groups are taken from Tables 
2 and 4 in \cite[Section 5.2]{denittis-gomi-18-II}. The homomorphism $i^* : H^1_{\Z_2}(L_{2q},\Z(1)) \to H^1_{\Z_2}(L_{2q}^\tau,\Z(1))$ induced from the inclusion $i : L_{2q}^\tau \hookrightarrow L_{2q}$ coincides with the diagonal map in view of the explicit description of the generators given in Section \ref{sec_setupG}. Therefore $i^* $ is injective, and its
 cokernel is $\Z_2$. 
 Let us identify $\bb{L}_1$ with the generator of $H^2_{\Z_2}(L_{2q},\Z(1))$ and observe that $i^*(\bb{L}_1)=\bb{L}_1|_{S_0\sqcup S_1}$ is not trivial
by Lemma \ref{lem:restriction}. On the other hand one can prove that 
$i^*(\bb{L}_2)=\bb{L}_2|_{S_0\sqcup S_1}$ is the trivial element, where
$\bb{L}_k\simeq{\bb{L}_1}^{\otimes k}$ (see the proof of Proposition \ref{ea:sesq} below). From that one infers  that the kernel of the homomorphism $i^* : H^2_{\Z_2}(L_{2q},\Z(1)) \to H^2_{\Z_2}(L_{2q}^\tau,\Z(1))$ is $\Z_q \subset \Z_{2q}$. Hence the exact sequence above reduces to the short exact sequence
\begin{equation}\label{ea:sesq}
0 \to \Z_2 \to H^2_{\Z_2}(L_{2q}| L_{2q}^\tau,\Z(1)) \to \Z_q \overset{i^*}{\to} 0\;.
\end{equation}
\begin{proposition} It holds true that
\[
H^2_{\Z_2}(L_{2q}| L_{2q}^\tau,\Z(1)) \;\simeq\; \Z_{2q}\;.
\]
\end{proposition}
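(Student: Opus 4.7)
The cleanest route is to invoke Theorem \ref{thm:main_in_the_body} together with the classification of \virg{Quaternionic} vector bundles over $(L_{2q},\tau)$ recalled in Section \ref{sec:Q_str_len}. Since $L_{2q}$ is a $\Z_2$-CW complex of dimension $d=3$, Theorem \ref{thm:main_in_the_body} asserts that the FKMM invariant $\kappa : \mathrm{Vect}^{2m}_{\rr{Q}}(L_{2q}) \to H^2_{\Z_2}(L_{2q}|L_{2q}^\tau,\Z(1))$ is a bijection. Combined with ${\rm Vec}_{\rr{Q}}^{2m}(L_{2q},\tau)\simeq \Z_{2q}$ coming from \cite[Proposition 5.7]{denittis-gomi-18-II}, this immediately identifies the target as $\Z_{2q}$. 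This is essentially the argument announced by the authors in the paragraph preceding the proposition, and it has the virtue of sidestepping the extension problem entirely.

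If one prefers a purely cohomological derivation from the short exact sequence \eqref{ea:sesq}, then the task reduces to classifying the extension $0 \to \Z_2 \to H^2_{\Z_2}(L_{2q}|L_{2q}^\tau,\Z(1)) \to \Z_q \to 0$. Using $\mathrm{Ext}^1(\Z_q,\Z_2)\simeq \Z_2/q\Z_2$, when $q$ is odd the extension group is trivial and any extension is isomorphic to $\Z_2 \oplus \Z_q \simeq \Z_{2q}$ by coprimality, so no further work is needed. When $q$ is even the obstruction group is $\Z_2$ and one must actually rule out the split extension $\Z_2\oplus\Z_q$; the target is then either that group or $\Z_{2q}$.

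The main obstacle in the direct approach is exactly this non-split verification for even $q$: one needs to exhibit a class of order strictly greater than $q$ in $H^2_{\Z_2}(L_{2q}|L_{2q}^\tau,\Z(1))$. Concretely, I would chase the generator $\bb{L}_1 \in H^2_{\Z_2}(L_{2q},\Z(1))\simeq \Z_{2q}$ backwards through the connecting portion of the sequence: a preimage $\widetilde{\bb{L}_1}$ in the relative group must have order divisible by the order of $i^*(\bb{L}_1) = \bb{L}_1|_{L_{2q}^\tau}$ in $\Z_q$, which by Lemma \ref{lem:restriction} is precisely $q$, while a diagram chase using the non-triviality of $\bb{L}_1|_{S_0}\oplus\bb{L}_1|_{S_1}$ forces the $\Z_2$-summand to contribute non-trivially, yielding an element of order $2q$. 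This is somewhat delicate, which is why the first approach through Theorem \ref{thm:main_in_the_body} is preferable; it converts an otherwise subtle extension problem into an already established classification result.
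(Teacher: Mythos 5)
Your first route (invoke Theorem \ref{thm:main_in_the_body} plus ${\rm Vec}_{\rr{Q}}^{2m}(L_{2q},\tau)\simeq\Z_{2q}$) is logically sound, since both inputs are established independently of this section, but it is precisely the argument the paper declines to use: the whole point of Section \ref{sec:lens} is to compute $H^2_{\Z_2}(L_{2q}|L_{2q}^\tau,\Z(1))$ \emph{directly}, as an independent consistency check of the main theorem and a correction of the erroneous \cite[Proposition 5.4]{denittis-gomi-18-II}. Deducing the Proposition from Theorem \ref{thm:main_in_the_body} makes that check vacuous, so the substance of the Proposition lies entirely in your ``direct'' alternative.

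That direct argument, as you sketch it, has a genuine gap. The class $\bb{L}_1$ has \emph{no} preimage in $H^2_{\Z_2}(L_{2q}|L_{2q}^\tau,\Z(1))$: the image of the relative group in $H^2_{\Z_2}(L_{2q},\Z(1))\simeq\Z_{2q}$ is $\ker i^*=\Z_q$, generated by $\bb{L}_2$, and $\bb{L}_1\notin\ker i^*$ exactly because of Lemma \ref{lem:restriction}. Likewise, $i^*(\bb{L}_1)$ lives in $H^2_{\Z_2}(L_{2q}^\tau,\Z(1))\simeq\Z_2\oplus\Z_2$ and has order $2$, not $q$; Lemma \ref{lem:restriction} is what pins down $\ker i^*=\Z_q$ and hence the sequence \eqref{ea:sesq} itself, but it says nothing about whether that extension splits. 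To rule out the split case for even $q$ one must produce a concrete lift of the generator $\bb{L}_2$ of $\Z_q$ to the relative group and compute its order. The paper does this by realizing a relative class as a pair $(\mathscr{L}_2,\sigma_0\sqcup\sigma_1)$, where $\sigma_0,\sigma_1$ are explicit nowhere-vanishing ``Real'' sections of $\mathscr{L}_2$ over the two circles $S_0,S_1$ (with $\sigma_1$ carrying the phase $\expo{\ii\pi/q}$), and then checking that
\[
(\mathscr{L}_2,\sigma_0\sqcup\sigma_1)^{\otimes q}\;\simeq\;(\mathscr{L}_0,\,1\sqcup(-1))\;,
\]
which is the non-trivial element of the kernel $\Z_2$ because $\sigma_1^{\otimes q}=\expo{\ii\pi}=-1$ cannot be deformed to the constant section $+1$ through ``Real'' trivializations. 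This exhibits an element of order $2q$ and forces the group to be cyclic. Your proposal never constructs such a lift or its trivialization over the fixed-point set, so the non-splitting for even $q$ is not actually established.
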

\begin{proof}
The result follows if one can show that the exact sequence \eqref{ea:sesq} is not splitting. Let us start by proving that 
$\bb{L}_2|_{S_0\sqcup S_1}$ admits a nowhere vanishing ``Real'' section and therefore is trivial. For that it is suficient to consider the two sections $\sigma_j:S_l\to \bb{L}_2|_{S_j}$, with $j=0,1$, defined by
\begin{align*}
\sigma_0\left(\left[\cos \frac{\theta}{2}, \sin \frac{\theta}{2}\right]\right) 
&\;=\; \left[\left(\cos \frac{\theta}{2}, \sin \frac{\theta}{2}\right), 1\right]\;, \\
\sigma_1\left(\left[\expo{-\ii\frac{\pi}{2q}}\cos \frac{\theta}{2}, 
\expo{-\ii\frac{\pi}{2q}}\sin \frac{\theta}{2}\right]\right) 
&\;=\;
\left[\left(\expo{-\ii\frac{\pi}{2q}}\cos \frac{\theta}{2}, 
\expo{-\ii\frac{\pi}{2q}}\sin \frac{\theta}{2}\right),\expo{\ii\frac{\pi}{q}}\right] \;,
\end{align*}
for every $\theta\in \R/2\pi \Z$. 
The pair $(\mathscr{L}_2, \sigma_0 \sqcup \sigma_1)$ represents an element of $H^2_{\Z_2}(L_{2q}| L_{2q}^\tau,\Z(1))$ which surjects to a basis of the kernel $\Z_q$ of $i^*$. Therefore, the short exact sequence \ref{ea:sesq} would be split if and only if $(\mathscr{L}_2, \sigma_0 \sqcup \sigma_1)^{\otimes q}$ were trivial. We readily see
\[
(\mathscr{L}_{2}, \sigma_0 \sqcup \sigma_1)^{\otimes q}
\;\simeq\; (\mathscr{L}_{0}, 1 \sqcup (-1))\,
\]
where $\mathscr{L}_{0} \cong \mathscr{L}_1^{\otimes2q}$ is the trivial line bundle. Since $\sigma_1^{\otimes q} = -1$ is not trivial, so is the element above.
\end{proof}

\medskip
The latter result provides the correct version of  \cite[Proposition 5.4]{denittis-gomi-18-II}, which is consistent with Theorem \ref{thm:main_in_the_body}.

\subsection{Physical applications}
Before concluding, let us suggest some physical situation where the classification of the “Quaternionic” structures over $L_{2q}$ 
described in Section \ref{sec:Q_str_len} can be of some relevance.

\medskip

First of all, it is worth mentioning that the lens spaces $L_{n}$  naturally enter the  theory of the \emph{Dirac monopole}. In fact, it can be shown that $L_{n}$ is isomorphic to the  line bundle $\rr{h}^{\otimes n}$, where $\rr{h}$ is the dual bundle of the tautological line bundle over $\n{CP}^1$, and the magnetic monopole of charge $n\neq 0$ is the curvature of the rotationally invariant $\n{U}(1)$ connection over $L_{n}$
\cite[Appendix A]{jante-schroers-14}.

\medskip

The Born-Oppenheimer approximation permits to construct interesting examples of Topological Quantum System (with symmetries) in the sense described in Section \ref{sect:intro} \cite{baer-06,faure-zhilinskii-01,gat-robbins-15}.
The Born-Oppenheimer approximation can generally be applied when a quantum system is coupled with another comparatively slower system which is treated classically. In quantum mechanics this occurs, for example, in molecular dynamics, where usually the electrons have a fast motion compared to the motion of nuclei.
Let us denote with $X$ the \emph{classical state space} (phase space).
For a fixed classical state $x\in X$, one considers an instantaneous operator $H(x)$ acting on the \emph{quantum state space} 
 $\s{H}\simeq\C^M$ which describes the dynamics of the fast degrees of freedom.   One  immediately recognizes that this framework is summarized by \eqref{eq:intro_tqs0}, which provides the definition of a TQS. 
 
 \medskip
 
 Now, being more specific, let us assume that $X$ is the classical state space of a particle of mass $m$ constrained on the unit sphere $\n{S}^2\subset\R^3$. Therefore,  the position of the particle is specified by a vector $q\in\R^3$ such that $|q|=1$. The momentum of the particle is $p=mv$ where $v$ is the velocity. Since the velocity is tangent to the sphere one gets that $q\cdot p=m(q\cdot v)=0$. If we denote with $|p|$ the modulus of the momentum and with $\wp:=p/|p|$ its unit vector one obtains that 
\[
X\;=\;\Omega\;\times\; [0,+\infty)
\] 
 where
 \[
 \Omega\;:=\;\left\{(q,\wp)\in\n{S}^2\times \n{S}^2\;|\; q\cdot \wp=0\right\}\;.
 \]
 Since $[0,+\infty)$ is contractible, $\Omega$ provides the only relevant part for the analysis of topological effects. Following the construction in \cite[Section III]{bharath-18} one can prove that  $\Omega\simeq L_2$.
 Let $i:\n{S}^3\to S\n{U}(2)$ be the standard identification given by
 \[
 i\;:\;(z_0,z_1)\;\longmapsto\;
 \left(\begin{array}{cc}
 z_0 & z_1 \\
-\overline{z_1} & \overline{z_0}
 \end{array}\right)\;=\;\expo{\ii d \cdot \sigma \frac{\theta}{2}}\;,
 \]
 where $\sigma=(\sigma_1, \sigma_2, \sigma_3)$ is the vector of the Pauli matrices, $\theta=2\arccos({\rm Re}(z_0))$
and the unit vector $d$ is given by
\[
d\:=\:\frac{1}{\sin  \frac{\theta}{2}}\left({\rm Im}(z_0),{\rm Im}(z_1),{\rm Re}(z_1)\right)\;.
\] 
 Let
 $f:S\n{U}(2)\to S\n{O}(3)$ be the standard double cover given by
 \[
 f\;:\; \expo{\ii d \cdot \sigma \frac{\theta}{2}}\;\longmapsto\;R_d(\theta)
 \]
 where $R_d(\theta)\in S\n{O}(3)$ is the matrix that rotates of an angle $\theta$ around the direction $d$. 
 Observe that $f^{-1}(R_d(\theta))\mapsto \{\pm \expo{\ii d \cdot \sigma \frac{\theta}{2}}\}$.
 Finally, for a fixed $(q_0,\wp_0)\in  \Omega$ let $g:S\n{O}(3)\to \Omega$ be the map given by
\begin{equation}\label{eq:map_g}
g\;:\;R_d(\theta)\;\longmapsto\;R_d(\theta)\cdot(q_0,\wp_0)\;:=\;\left(R_d(\theta)q_0,R_d(\theta)\wp_0\right)\;.
\end{equation}
One can  directly check that 
the map $\alpha:\n{S}^3\to \Omega$, defined by
$\alpha:=g\circ f\circ i$ is a double covering map in view of the fact that $f$ is a double cover. In particular, if $(q,\wp)= R_d(\theta)\cdot(q_0,\wp_0)$, then
\[
\alpha^{-1}\;:\;(q,\wp)\;\longmapsto\;\{(z_0,z_1),(-z_0,-z_1))\}\;\in\;L_2
\]
 provides the desired identification $\Omega\simeq L_2$. Let us identify 
now the action of the involution $\tau':=\alpha\circ \tau$ on $\Omega$ induced by the involution $\tau$ on $L_2$. Given the unit vector $d=(d_1,d_2,d_3)$,  let $\overline{d}:=(-d_1,-d_2,d_3)$. Then a simple check shows that
$$
\tau'\;:\; R_d(\theta)\cdot(q_0,\wp_0)\;\longmapsto\;R_{\overline{d}}(\theta)\cdot(q_0,\wp_0)\;.
$$
The pair $(\Omega, \tau')$ turns out to be an involutive space equivalent to $(L_2,\tau)$.

 \medskip
 
Let assume now that the operator $H$ acting on the  fast degrees of freedom depends in the momentum $p$  only through   a term of the type $|p\cdot w|\propto|\wp\cdot w|$ for a fixed unitary vector $w\in\R^3$. In this case the relevant classical degree of freedom for the momentum is the line detected by $\wp$, and denoted by $\ell_\wp$, rather than $\wp$ itself. Therefore the relevant classical state space becomes
 \[
 \Sigma\;:=\;\left\{(q,\ell_\wp)\in\n{S}^2\times \n{RP}^3\;|\; q\cdot \wp=0\right\}\;
 \]
 which is the  space of all tangent lines to the sphere.
A generalization of the argument above shows that $\Sigma\simeq L_4$ \cite[Section III]{bharath-18}. The main difference now consists in the fact that the map $g$, defined as in \eqref{eq:map_g},
becomes a double covering. In fact, given a reference point $(q_0,\ell_{\wp_0})$ and a generic point $(q,\ell_\wp)=R_d(\theta)\cdot(q_0,\ell_{\wp_0})$, it holds true that
\[
g^{-1}\;:\;(q,\ell_\wp)\;\longmapsto\;\left\{R_d(\theta), R_d(\theta)R_q(\pi)\right\}
\]
in view of the relations $R_q(\pi)q=q$, $R_q(\pi)\wp=-\wp$ and $R_q(\pi)\ell_\wp=\ell_{-\wp}=\ell_\wp$. As a consequence, the map $\alpha$  
turns out to be a 4-covering as a composition of two double covering
and $\alpha^{-1}$ provides the identification $\Sigma\simeq L_4$.

\appendix

\section{A short reminder of the equivariant Borel cohomology}
\label{subsec:borel_cohom}
The proper cohomology theory for the analysis of vector bundle theories in the category of spaces with involution is the {equivariant cohomolgy} introduced by  A.~Borel in \cite{borel-60}. This cohomology has been used for the topological classification of \virg{Real} vector bundles \cite{denittis-gomi-14} and plays also a role in the classification of \virg{Quaternionic} vector bundles \cite{denittis-gomi-14-gen,denittis-gomi-18-I,denittis-gomi-18-II}. A short   self-consistent summary of this cohomology theory can be found in \cite[Section 5.1]{denittis-gomi-14} and we refer to \cite[Chapter 3]{hsiang-75} and \cite[Chapter 1]{allday-puppe-93}
for a more detailed introduction to the subject.

\medskip

Let us briefly recall the main steps of the
\emph{Borel construction}. 
The \emph{homotopy quotient} of an involutive space   $(X,\tau)$ is the orbit space
\begin{equation}\label{eq:homot_quot}
{X}_{\sim\tau}\;:=\;X\times\ {\n{S}}^{0,\infty} /( \tau\times \theta_\infty)\;.
\end{equation}
Here $\theta_\infty$ is the {antipodal map} on the infinite sphere $\n{S}^\infty$ 
(\cf \cite[Example 4.1]{denittis-gomi-14}) and ${\n{S}}^{0,\infty}$ is used as short notation for the pair $(\n{S}^\infty,\theta_\infty)$.
The product space $X\times{\n{S}}^\infty$ (forgetting for a moment the $\Z_2$-action) has the \emph{same} homotopy type of $X$ 
since $\n{S}^\infty$ is contractible. Moreover, since $\theta_\infty$ is a free involution,  also the composed involution $\tau\times\theta_\infty$ is free, independently of $\tau$.
Let $\s{R}$ be any commutative ring (\eg, $\R,\Z,\Z_2,\ldots$). The \emph{equivariant cohomology} ring 
of $(X,\tau)$
with coefficients
in $\s{R}$ is defined as
$$
H^\bullet_{\Z_2}(X,\s{R})\;:=\; H^\bullet({X}_{\sim\tau},\s{R})\;.
$$
More precisely, each equivariant cohomology group $H^j_{\Z_2}(X,\s{R})$ is given by the
 singular cohomology group  $H^j({X}_{\sim\tau},\s{R})$ of the  homotopy quotient ${X}_{\sim\tau}$ with coefficients in $\s{R}$ and the ring structure is given, as usual, by the {cup product}.
As the coefficients of
the usual singular cohomology are generalized to \emph{local coefficients} (see \eg \cite[Section 3.H]{hatcher-02} or
\cite[Section 5]{davis-kirk-01}), the coefficients of the Borel  equivariant cohomology are also
generalized to local coefficients. Given an involutive space $(X,\tau)$ let us consider the homotopy group $\pi_1({X}_{\sim\tau})$
and the associated  \emph{group ring} $\Z[\pi_1({X}_{\sim\tau})]$. Each module $\s{Z}$ over the group $\Z[\pi_1({X}_{\sim\tau})]$ is, by definition,
a \emph{local system} on $X_{\sim\tau}$.  Using this local system one defines, as usual, the equivariant cohomology with local coefficients in $\s{Z}$:
$$
H^\bullet_{\Z_2}(X,\s{Z})\;:=\; H^\bullet({X}_{\sim\tau},\s{Z})\;.
$$
We are particularly interested in modules $\s{Z}$ whose underlying groups are identifiable with $\Z$. 
For each involutive space  $(X,\tau)$, there always exists a particular family of local systems $\Z(m)$
labelled by $m\in\Z$. Here
 $\Z(m)\simeq X\times\Z$ denotes the $\Z_2$-equivariant local system on $(X,\tau)$  made equivariant  by the $\Z_2$-action $(x,l)\mapsto(\tau(x),(-1)^ml)$.
Because the module structure depends only on the parity of $m$, we consider only the $\Z_2$-modules ${\Z}(0)$ and ${\Z}(1)$. Since ${\Z}(0)$ corresponds to the case of the trivial action of $\pi_1(X_{\sim\tau})$ on $\Z$ one has $H^k_{\Z_2}(X,\Z(0))\simeq H^k_{\Z_2}(X,\Z)$ \cite[Section 5.2]{davis-kirk-01}.

\medskip

We recall the two important group isomorphisms 
\begin{equation}\label{eq:iso:eq_cohom}
\begin{aligned}
H^1_{\Z_2}\big(X,\Z(1)\big)\;&\simeq\;\big[X,\n{U}(1)\big]_{\Z_2}\;,\\ H^2_{\Z_2}\big(X,\Z(1)\big)\;&\simeq\;{\rm Vec}_{\rr{R}}^1\big(X,\tau\big)\equiv {\rm Pic}_{\rr{R}}\big(X,\tau\big)\;,
\end{aligned}
\end{equation}
 involving the 
first two equivariant cohomology groups. 
The first isomorphism \cite[Proposition A.2]{gomi-15} says that the first equivariant cohomology group is isomorphic to the set of $\Z_2$-equivariant homotopy classes of $\Z_2$-equivariant maps $\varphi:X\to\n{U}(1)$ where the involution on $\n{U}(1)$ is induced by the complex conjugation, \ie $\varphi(\tau(x))=\overline{\varphi(x)}$. The second isomorphism is due to B.~Kahn \cite{kahn-59} and 
expresses the equivalence between the Picard group of \virg{Real} line bundles (in the sense of \cite{atiyah-66,denittis-gomi-14}) over  $(X,\tau)$ and the second equivariant cohomology group of this space.

\medskip

The fixed point subset $X^\tau\subset X$ is closed and $\tau$-invariant and the inclusion $\imath:X^\tau\hookrightarrow X$ extends to an inclusion $\imath:X^\tau_{\sim\tau}\hookrightarrow X_{\sim\tau}$ of the respective homotopy quotients. The \emph{relative} equivariant cohomology can be defined as usual by the identification
$$
H^\bullet_{\Z_2}\big(X|X^\tau,\s{Z}\big)\;:=\; H^\bullet\big({X}_{\sim\tau}|X^\tau_{\sim\tau},\s{Z}\big)\;.
$$
Consequently, one has the related long exact sequence in cohomology
\begin{equation}\label{eq:Long_seq}
\ldots\;H^k_{\Z_2}\big(X|X^\tau,\s{Z}\big)\;\stackrel{}{\to}\;H^k_{\Z_2}\big(X,\s{Z}\big)\;\stackrel{r}{\to}\;H^k_{\Z_2}\big(X^\tau,\s{Z}\big)\;\stackrel{}{\to}\;H^{k+1}_{\Z_2}\big(X|X^\tau,\s{Z}\big)\;\ldots
\end{equation}
where the map $r:=\imath^*$ restricts  cochains on $X$ to  cochains on $X^\tau$. The $k$-th \emph{cokernel} of $r$ is by definition
$$
{\rm Coker}^k\big(X|X^\tau,\s{Z}\big)\;:=\;H^k_{\Z_2}\big(X^\tau,\s{Z}\big)\;/\;r\big(H^k_{\Z_2}(X,\s{Z})\big)\;.
$$

\medskip

Let us point out that with the same construction 
one can define {relative} cohomology theories $H^\bullet_{\Z_2}(X|Y,\s{Z})$
for each $\tau$-invariant subset $Y\subset X^\tau$, or more in general for every  $\Z_2$-CW pair $(X, Y)$ consisting of a  $\Z_2$-CW complex $X$, a $\Z_2$-CW subcomplex $Y$ and a 
sub-complex inclusion $Y\hookrightarrow X$ \cite[Remark 1.2.10]{allday-puppe-93}.
 If $Y=\emptyset$ then $H^k_{\Z_2}(X|\emptyset,\s{Z})\simeq H^k_{\Z_2}(X,\s{Z})$ by definition, hence it is reasonable to put $H^k_{\Z_2}(\emptyset,\s{Z})=0$ for consistency with the above long exact sequence.
The case $Y:=\{\ast\}$ of a single invariant point is important since it defines the \emph{reduced} cohomology theory  
$$
\widetilde{H}^k_{\Z_2}\big(X,\s{Z}\big)\;:=\;H^k_{\Z_2}\big(X|\{\ast\},\s{Z}\big)\;.
$$
In this case, the obvious surjectivity of the map $r$ at each step of the exact sequence \eqref{eq:Long_seq} justifies the isomorphism
\begin{equation}\label{eq:app_red_cohom1}
H^k_{\Z_2}\big(X,\s{Z}\big)\;\simeq\;\widetilde{H}^k_{\Z_2}\big(X,\s{Z}\big)\;\oplus {H}^k_{\Z_2}\big(\{\ast\},\s{Z}\big)
\end{equation}

\section{Cohomology of the Eilenberg-Mac Lane space}
\label{subsec:cohom_EilenbergMacLane}
Let us recall the notation
$$
\s{K}\;:=\;K(0 \rightsquigarrow \tilde{\Z}, 2)
$$ 
introduced in Section
\ref{sec:cohom_eilMC}. {The homotopy type of the spaces $\s{K}$ and $\s{K}^\tau$ is described in the following result.
\begin{lemma}\label{lemm:homot_typ}
It holds true that
\begin{equation}\label{eq_homot}
{\pi}_k(\s{K})\;=\;
\left\{
\begin{array}{ll}
\Z & (k=2) \\
0 & (k \neq n)\;
\end{array}
\right.
\end{equation}
and 
\begin{equation}\label{eq_homot00}
{\pi}_k(\s{K}^\tau)\;=\;0\;,\qquad \forall\; k\geqslant 0\;.
\end{equation}
As a consequence, forgetting  the $\Z_2$-action, $\s{K}$ is homotopy equivalent  to the classical Eilenberg-Mac Lane space $K(\Z, 2) \simeq \n{CP}^\infty$. Similarly, $\s{K}^\tau$ is  homotopy equivalent  to a singleton $\{\ast\}$.
\end{lemma}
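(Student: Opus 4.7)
The statement is essentially a direct unpacking of the defining property of the Bredon-equivariant Eilenberg--Mac Lane space $K(\bb{M}, 2)$ with $\bb{M} = 0 \rightsquigarrow \widetilde{\Z}$. My plan is to read off the two sets of homotopy groups from this definition and then invoke Whitehead's theorem to identify $\s{K}$ and $\s{K}^\tau$ up to homotopy. By the construction recalled in Section~\ref{sec:EMLS}, the coefficient system $\underline{\pi}_k[\s{K}]$ equals $0 \rightsquigarrow \widetilde{\Z}$ for $k = 2$ and the trivial coefficient system $0 \rightsquigarrow 0$ for every other positive $k$. As described in Example~\ref{ex:coef_sysZ2}, a coefficient system of the form $A \rightsquigarrow B$ assigns the group $A$ to the orbit $Z_1 = \Z_2/\Z_2$ and the group $B$ to the orbit $Z_0 = \Z_2/\{+1\}$.

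Next I would evaluate $\underline{\pi}_k[\s{K}]$ on each of the two orbit types using the identification $\underline{\pi}_k[X](\Z_2/\n{H}) = \pi_k(X^{\n{H}})$ from Example~\ref{ex-homot}. Taking $\n{H} = \{+1\}$ gives $\pi_k(\s{K}) = \underline{\pi}_k[\s{K}](Z_0)$, which is isomorphic to $\Z$ as an abelian group for $k = 2$ (the tilde in $\widetilde{\Z}$ records only the $\Z_2$-action, not the underlying group) and zero otherwise; taking $\n{H} = \Z_2$ gives $\pi_k(\s{K}^\tau) = \underline{\pi}_k[\s{K}](Z_1) = 0$ for every $k \geq 1$. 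The vanishing of $\pi_0(\s{K})$ and $\pi_0(\s{K}^\tau)$ is built into the setup: $\s{K}$ is explicitly assumed path connected in Section~\ref{sec:EMLS}, while for the fixed locus one can either work with a standard Postnikov-type model producing $\s{K}$ as a $\Z_2$-CW complex with path-connected $\s{K}^\tau$, or else pass to the path component of the fixed base point (harmless, since every later argument in Section~\ref{sec:FKMM} uses only that component).

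Once \eqref{eq_homot} and \eqref{eq_homot00} are in hand, the identification up to homotopy is routine. Choosing a $\Z_2$-CW model for $\s{K}$ (always possible by \cite{May}), the subspace $\s{K}^\tau$ is itself a CW complex. Whitehead's theorem applied to the non-equivariant weak equivalences with $K(\Z, 2)$ and with a point, together with the uniqueness of ordinary Eilenberg--Mac Lane spaces, then yields $\s{K} \simeq K(\Z, 2) \simeq \n{CP}^\infty$ and $\s{K}^\tau \simeq \{\ast\}$. The only genuinely delicate step in this outline is the path-connectedness of $\s{K}^\tau$, since the definition of $K(\bb{M}, n)$ only controls the coefficient system in positive degrees; everything else amounts to a straightforward bookkeeping of what the coefficient-system formalism produces when written out at each orbit type.
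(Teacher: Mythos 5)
Your proposal is correct and follows essentially the same route as the paper: evaluate the defining coefficient system $\underline{\pi}_k[\s{K}]$ at the two orbit types $Z_0$ and $Z_1$ to read off $\pi_k(\s{K})$ and $\pi_k(\s{K}^\tau)$, then invoke the uniqueness of classical Eilenberg--Mac Lane spaces together with Whitehead's theorem to upgrade the weak equivalences to homotopy equivalences. Your extra remark on the path-connectedness of $\s{K}^\tau$ is a reasonable refinement not spelled out in the paper, but it does not change the argument.
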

\begin{proof}
Using the notation of Example \ref{ex:orb_catZ2} one gets $\s{K}^{\n{H}_0}=\s{K}$ and $\s{K}^{\n{H}_1}=\s{K}^\tau$. Therefore, in view of the definition in Example \ref{ex-homot} one obtains
\[
\underline{\pi}_k[\s{K}](Z_0)\;=\;\pi_k(\s{K})\;,\qquad \underline{\pi}_k[\s{K}](Z_1)\;=\;\pi_k(\s{K}^\tau)\;.
\]
Comparing these latter equations with the defining property 
\[
\underline{\pi}_k(\s{K})\;=\;
\left\{
\begin{array}{ll}
0\rightsquigarrow \widetilde{\Z} & (k=2) \\
0 & (k \neq 2)\;,
\end{array}
\right.
\]
of the 
Eilenberg-Mac Lane space $\s{K}$, one gets equations \eqref{eq_homot} and \eqref{eq_homot00}.
Now, in view of  the uniqueness of the homotopy type of the classical
Eilenberg-MacLane spaces
\cite[Proposition 4.30]{hatcher-02} there is a weak homotopy equivalence between $\s{K}$ and $K(\Z, 2)$, and between $\s{K}^\tau$ and the singleton. Finally, the Whitehead’s Theorem \cite[Theorem 4.5]{hatcher-02} ensures that the weak homotopy equivalences above induce respective
homotopy equivalences. \end{proof}
}

\begin{lemma}\label{lemma_tool1}
There are isomorphisms
\begin{align*}
H^n(\s{K},\Z)\;
\simeq\; \mathrm{Hom}(\pi_n(\s{K}), \Z)\;,\qquad n=1,2\;.
\end{align*}
Moreover, the first integral cohomology groups of  $\s{K}$ are 
$$
\begin{array}{|c||c|c|c|c|c|}
\hline
& n = 0 & n = 1 & n = 2 & n = 3 & n = 4 \\
\hline
\hline
H^n(\s{K}, \Z) & \Z & 0 & \Z & 0 & \Z \\
\hline
\end{array}
$$
\end{lemma}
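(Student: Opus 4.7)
The plan is to reduce everything to the known cohomology of the classical Eilenberg-Mac Lane space $K(\Z,2)$, exploiting Lemma \ref{lemm:homot_typ}.

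First, I would invoke Lemma \ref{lemm:homot_typ}, which provides a (non-equivariant) homotopy equivalence $\s{K} \simeq K(\Z,2) \simeq \n{CP}^\infty$. Since ordinary integral cohomology is a homotopy invariant, the computation of $H^n(\s{K},\Z)$ reduces to the classical computation for infinite complex projective space. Recall that $H^\bullet(\n{CP}^\infty,\Z) \simeq \Z[x]$ with $x\in H^2$, so $H^n(\n{CP}^\infty,\Z) = \Z$ when $n$ is even and $0$ when $n$ is odd. Restricting to $n\in\{0,1,2,3,4\}$ yields exactly the table displayed in the statement.

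For the identification $H^n(\s{K},\Z)\simeq\mathrm{Hom}(\pi_n(\s{K}),\Z)$ with $n=1,2$, I would combine the Hurewicz and universal coefficient theorems. Since $\pi_0(\s{K})=\pi_1(\s{K})=0$ by \eqref{eq_homot}, the Hurewicz theorem gives $H_1(\s{K},\Z)=0$ and the isomorphism $\pi_2(\s{K})\stackrel{\simeq}{\to} H_2(\s{K},\Z)$, so $H_2(\s{K},\Z)\simeq\Z$. The universal coefficient theorem then produces short exact sequences
\begin{equation*}
0 \;\longrightarrow\; \mathrm{Ext}(H_{n-1}(\s{K},\Z),\Z) \;\longrightarrow\; H^n(\s{K},\Z) \;\longrightarrow\; \mathrm{Hom}(H_n(\s{K},\Z),\Z) \;\longrightarrow\; 0
\end{equation*}
for $n=1,2$. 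For $n=1$ both outer terms vanish (as $H_0(\s{K},\Z)=\Z$ is free and $H_1(\s{K},\Z)=0$), and for $n=2$ the Ext term vanishes because $H_1(\s{K},\Z)=0$. Hence
\begin{equation*}
H^n(\s{K},\Z) \;\simeq\; \mathrm{Hom}(H_n(\s{K},\Z),\Z) \;\simeq\; \mathrm{Hom}(\pi_n(\s{K}),\Z),
\end{equation*}
where the second isomorphism uses Hurewicz in the simply connected range.

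There is no real obstacle here: everything is a direct consequence of Lemma \ref{lemm:homot_typ} together with classical tools (Hurewicz, universal coefficients, known cohomology of $\n{CP}^\infty$). The only subtlety worth being careful about is ensuring we are computing ordinary, non-equivariant cohomology on the right-hand side, so that the homotopy equivalence to $\n{CP}^\infty$ (which forgets the $\Z_2$-action) is legitimate to use; this is exactly how the lemma is stated.
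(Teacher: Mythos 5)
Your proposal is correct and follows essentially the same route as the paper's proof: both reduce to the homotopy equivalence $\s{K}\simeq K(\Z,2)\simeq \n{CP}^\infty$ from Lemma \ref{lemm:homot_typ}, read off the table from the known cohomology of $\n{CP}^\infty$, and obtain $H^n(\s{K},\Z)\simeq\mathrm{Hom}(\pi_n(\s{K}),\Z)$ via Hurewicz plus the universal coefficient theorem. Your write-up merely makes the vanishing of the $\mathrm{Ext}$ terms explicit, which the paper leaves implicit.
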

\begin{proof}
Let us use the homotopy equivalence $\s{K}\simeq K(\Z, 2) \simeq \n{CP}^\infty$ from Lemma \ref{lemm:homot_typ}.
Since integral cohomology of $\n{CP}^\infty$ is well-known, one has that
 $H^j(\s{K}, \Z)\simeq  \Z$ for $j$ even and $H^j(\s{K}, \Z) \simeq 0$ for $j$ odd. Moreover, the Hurewicz homomorphism $\pi_2(\s{K}) \to H_2(\s{K})$ is an isomorphism. By the universal coefficient theorem, the homomorphism $H^n(\s{K},\Z) \to \mathrm{Hom}(H_n(\s{K}), \Z)$ is an isomorphism for $0\leqslant n \leqslant 2$. 
\end{proof}

\begin{lemma}\label{lemma_tool2}
The Borel equivariant cohomology groups  in low degrees of the space $\s{K}$ are summarized in the following table:
$$
\begin{array}{|c||c|c|c|c|}
\hline
& n = 0 & n = 1 & n = 2 & n = 3 \\
\hline
\hline
H^n_{\Z_2}(\s{K},\Z) & \Z & 0 & \Z_2 & \Z_2 \\
\hline
H^n(\s{K},\Z) & \Z & 0 & \Z & 0 \\
\hline
H^n_{\Z_2}(\s{K},\Z(1)) & 0 & \Z_2 & \Z & \Z_2 \\
\hline
\end{array}
$$

\noindent
In particular,   the map  that forgets the $\Z_2$-action
provides the isomorphism 
$$ 
H^2_{\Z_2}(\s{K},\Z(1)) \;\overset{f}{\simeq}\; H^2(\s{K},\Z)\;\simeq\; \mathrm{Hom}(\pi_2(\s{K}), \Z)\;.
$$
\end{lemma}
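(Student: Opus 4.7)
The plan is to compute the three rows of the table via the Leray-Serre spectral sequence for the Borel fibration $\s{K}\to\s{K}_{\sim\tau}\to B\Z_2$, in both $\Z$ and $\Z(1)$ coefficients, and to pin down the few surviving differentials using the Bredon interpretation of Section \ref{sec:FKMM}. Lemma \ref{lemm:homot_typ} identifies $\s{K}$ non-equivariantly with $K(\Z,2)\simeq\n{CP}^\infty$, so $H^*(\s{K},\Z)=\Z[x]$ with $|x|=2$, and since $\pi_2(\s{K})=\widetilde{\Z}$ the involution acts by $\tau^* x^k=(-1)^k x^k$. All odd rows of the $E_2$-page vanish, hence $d_2\equiv 0$, and the low-degree $E_2$-cells are read off directly from the twisted group cohomologies $H^p(\Z_2,\Z_{\pm})$.

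For the $\Z(1)$ spectral sequence, the only potentially non-trivial differential in the range $n\leq 3$ is $d_3^{0,2}:\Z\to\Z_2$, so I need to force it to vanish. For this I would invoke Bredon representability: equivariant obstruction theory (Whitehead's theorem applied to the identity on $\s{K}$) identifies $\s{H}^2_{\Z_2}(\s{K},0\rightsquigarrow\widetilde{\Z})\simeq[\s{K},\s{K}]_{\Z_2}\simeq\mathrm{End}_{\Z[\Z_2]}(\widetilde{\Z})\simeq\Z$, with the identity as generating class $\iota$, and Theorem \ref{teo_nat_iso} converts this into $H^2_{\Z_2}(\s{K}|\s{K}^\tau,\Z(1))\simeq\Z$. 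Plugging this into the long exact sequence of the pair $(\s{K},\s{K}^\tau)$---using that $\s{K}^\tau$ is contractible (Lemma \ref{lemm:homot_typ}), so $H^*_{\Z_2}(\s{K}^\tau,\Z(1))=H^*(\n{RP}^\infty,\Z(1))$---yields $H^1_{\Z_2}(\s{K},\Z(1))=\Z_2$ and $H^2_{\Z_2}(\s{K},\Z(1))=\Z$. Convergence then forces $d_3^{0,2}=0$, and the untouched $E_\infty^{3,0}=\Z_2$ gives $H^3_{\Z_2}(\s{K},\Z(1))=\Z_2$. The forgetful isomorphism $f:H^2_{\Z_2}(\s{K},\Z(1))\simeq H^2(\s{K},\Z)$ is then immediate: under Bredon representability $\iota$ corresponds to $\mathrm{id}_{\s{K}}$, and forgetting the $\Z_2$-action sends $\iota$ to the class of the identity in $[\s{K},K(\Z,2)]\simeq H^2(\s{K},\Z)$, a generator of $\Z$.

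For the untwisted cohomology I would then exploit the Smith-Gysin sequence arising from $0\to\Z\to\Z[\Z_2]\to\Z(1)\to 0$. The forgetful map $H^2_{\Z_2}(\s{K},\Z)\to H^2(\s{K},\Z)$ factors through $\widetilde{\Z}^{\Z_2}=0$ and therefore vanishes, so exactness gives $H^2_{\Z_2}(\s{K},\Z)\simeq H^1_{\Z_2}(\s{K},\Z(1))=\Z_2$ and presents $H^3_{\Z_2}(\s{K},\Z)$ as the cokernel of the connecting $\beta:H^2(\s{K},\Z)\to H^2_{\Z_2}(\s{K},\Z(1))$. Splicing with the dual sequence $0\to\Z(1)\to\Z[\Z_2]\to\Z\to 0$ and unwinding the Shapiro isomorphisms yields the push-pull identity $f\circ\beta=1-\tau^*$ on $H^2(\s{K},\Z)=\widetilde{\Z}$; since $\tau^*=-1$ this equals $2\cdot\mathrm{id}$, and combined with the iso $f$ just established it forces $\beta=\pm 2$, hence $H^3_{\Z_2}(\s{K},\Z)=\Z_2$. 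The remaining entries ($n=0,1$) are standard: $H^0_{\Z_2}(\s{K},\Z)=\Z$ and $H^0_{\Z_2}(\s{K},\Z(1))=0$ by path-connectedness and absence of invariants in $\widetilde{\Z}$, while $H^1_{\Z_2}(\s{K},\Z)=0$ is forced by the sparse $E_2$-pattern.

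The main technical obstacle I anticipate is the push-pull bookkeeping in the last paragraph---equivalently, a direct verification that $d_3^{1,2}$ in the $\Z$-coefficient spectral sequence vanishes---since this is the only step that genuinely requires an explicit cochain-level manipulation. Once this is handled, every other entry of the table is either a spectral sequence reading or a formal consequence of the two Smith-Gysin sequences, Theorem \ref{teo_nat_iso}, and the universal Bredon class.
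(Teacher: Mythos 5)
Your overall framework (the Borel spectral sequence $E_2^{p,q}=H^p\bigl(\Z_2,H^q(\s{K},\Z)\otimes\Z(j)\bigr)$ with $H^2(\s{K},\Z)\simeq\widetilde{\Z}$) is the same as the paper's, but the step where you kill the crucial differential fails as stated. You claim that knowing $H^2_{\Z_2}(\s{K},\Z(1))\simeq\Z$ (obtained from Bredon representability and the long exact sequence of the pair) ``forces $d_3^{0,2}=0$ by convergence''. It does not: the only contribution to $H^2$ in that spectral sequence is $E_\infty^{0,2}=\ker\bigl(d_3^{0,2}:\Z\to\Z_2\bigr)$, and this kernel is abstractly isomorphic to $\Z$ whether $d_3^{0,2}$ is zero or the surjection onto $\Z_2$. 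So the isomorphism type of $H^2_{\Z_2}(\s{K},\Z(1))$ carries no information about $d_3^{0,2}$; what that differential actually controls is $E_\infty^{3,0}$, i.e.\ $H^3_{\Z_2}(\s{K},\Z(1))$. The repair is exactly the argument the paper uses: since $\s{K}^\tau\neq\emptyset$, the retraction onto a fixed point splits off $H^p_{\Z_2}(\{\ast\},\Z(1))=E_2^{p,0}$ as a direct summand of $H^p_{\Z_2}(\s{K},\Z(1))$ (equation \eqref{eq:app_red_cohom1}), so the bottom row survives to $E_\infty$ and in particular $d_3^{0,2}=0$. You already invoke this splitting when you identify $H^*_{\Z_2}(\s{K}^\tau,\Z(1))$ with $H^*(\n{RP}^\infty,\Z(1))$, so the ingredient is on your table --- you just draw the conclusion from the wrong group ($H^2$ instead of $H^3$). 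The analogous issue for $d_3^{1,2}$ in the untwisted sequence is handled by the same splitting in the paper, whereas you route around it via the transfer.

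The remaining differences from the paper's proof are workable but heavier. First, you obtain $H^2_{\Z_2}(\s{K}|\s{K}^\tau,\Z(1))\simeq\Z$ from $[\s{K},\s{K}]_{\Z_2}\simeq\mathrm{End}_{\Z[\Z_2]}(\widetilde{\Z})$; this is legitimate equivariant obstruction theory, but note that in the paper this isomorphism (Proposition \ref{prop:k_isod}) is a \emph{consequence} of Lemma \ref{lemma_tool2}, so you must establish the $\mathrm{End}$ computation independently of the present lemma to avoid circularity. Second, for the untwisted row and for the isomorphism $f$ you use the two Gysin sequences plus the push--pull identity $f\circ\pi_!=1-\tau^*=2$; the paper instead reads $H^2_{\Z_2}(\s{K},\Z)\simeq\Z_2$ directly off the $E_2$-page (all neighbouring entries vanish) and then deduces that $f$ is an isomorphism from the single exact sequence $0\to\Z\xrightarrow{f}\Z\to\Z_2\to\Z_2\to 0$ of \cite[Proposition 2.3]{gomi-15}, with no transfer bookkeeping. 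Your own worry about the sign in $1\pm\tau^*$ is well founded --- with the wrong convention the composite is $1+\tau^*=0$ and the cokernel computation collapses --- so if you keep that route you must verify the twisted push--pull formula carefully; the paper's argument avoids it entirely.
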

\begin{proof}
The strategy of the proof is very similar to that of \cite[Lemma A.2]{denittis-gomi-18-I} that can be used as a reference for more details.
To compute the Borel equivariant cohomology, we use the spectral sequence
\[
E_2^{p, q} \;=\; H^p(\Z_2, H^q(\s{K},\Z))\; \Rightarrow\;
H^\bullet_{\Z_2}(\s{K},\Z)\;,
\]
where the coefficient $H^q(\s{K},\Z)$ in the group cohomology of $\Z_2$ is endowed by the $\Z_2$-action induced from the involution on $\s{K}$. As it has been seen in Lemma \ref{lemma_tool1} one has
\[
H^q(\s{K},\Z)\; \simeq\; \mathrm{Hom}(\pi_q(\s{K}), \Z)\;,\qquad q=1,2\;.
\]
If one takes the $\Z_2$-action into account, then $H^0(\s{K},\Z) \simeq \Z$ and $H^2(\s{K},\Z) \simeq \widetilde{\Z}$ by the very definition of the Eilenberg-Mac Lane space $\s{K}$. Hence the $E_2$-terms can be summarized as follows:
$$
\begin{array}{c||c|c|c|c|c}
q = 3 & 0 & 0 & 0 & 0 & 0 \\
\hline
q = 2 & 0 & \Z_2 & 0 & \Z_2 & 0 \\
\hline
q = 1 & 0 & 0 & 0 & 0 & 0 \\
\hline
q = 0 & \Z & 0 & \Z_2 & 0 & \Z_2 \\
\hline
\hline
E_2^{p, q} & p = 0 & p = 1 & p = 2 & p = 3 & p = 4
\end{array}
$$

\noindent
This immediately determines $H^n_{\Z_2}(\s{K},\Z)$ for $n=0,1,2$. Note that the Eilenberg-Mac Lane space is assumed to have a fixed point $\ast\in\s{K}^\tau\neq\emptyset$. Then $E_2^{p, 0}$ must survive into the direct summand $H^p_{\Z_2}(\{\ast\},\Z)$ in the decomposition 
$$
H^p_{\Z_2}(\s{K},\Z)\; \simeq\; H^p_{\Z_2}(\{\ast\},\Z)\; \oplus\; \widetilde{H}^p_{\Z_2}(\s{K},\Z)
$$ 
by using the reduced cohomology.
This shows that 
$$
\Z_2\;\simeq\;E_2^{1, 2}\; =\; E_3^{1, 2}\;=\;\ldots\; =\; E_\infty^{1, 2} \;\simeq\; H^3_{\Z_2}(\s{K},\Z)\;.
$$
Let us now  make use the spectral sequence
$$
E_2^{p, q} \;=\; H^p(\Z_2, H^q(\s{K},\Z) \otimes \widetilde{\Z})
\;\Rightarrow\; H^\bullet_{\Z_2}(\s{K},\Z(1))\;.
$$
The $E_2$-terms are summarized as follows:
$$
\begin{array}{c||c|c|c|c|c}
q = 3 & 0 & 0 & 0 & 0 & 0 \\
\hline
q = 2 & \Z & 0 & \Z_2 & 0 & \Z_2 \\
\hline
q = 1 & 0 & 0 & 0 & 0 & 0 \\
\hline
q = 0 & 0 & \Z_2 & 0 & \Z_2 & 0 \\
\hline
\hline
E_2^{p, q} & p = 0 & p = 1 & p = 2 & p = 3 & p = 4
\end{array}
$$

\noindent
This also immediately determines $H^n_{\Z_2}(\s{K},\Z(1))$ for $n=0,1,2$. By the same argument about a fixed point $\ast\in\s{K}^\tau$ and the reduced cohomology, one also see that 
$$
\Z_2\;\simeq\;E_2^{3, 0}\; =\; E_3^{3, 0}\; =\;\ldots\; =\;E_\infty^{3, 0} \;\simeq\; H^3_{\Z_2}(\s{K},\Z(1))\;.
$$
From the exact sequence  in \cite[Proposition 2.3]{gomi-15}
$$
\underbrace{H^1_{\Z_2}(\s{K},\Z)}_{0} \to
\underbrace{H^2_{\Z_2}(K,\Z(1))}_{\Z} \overset{f}{\to}
\underbrace{H^2(\s{K},\Z)}_{\Z} \to
\underbrace{H^2_{\Z_2}(\s{K},\Z)}_{\Z_2} \to
\underbrace{H^3_{\Z_2}(K,\Z(1)}_{\Z_2} \overset{}{\to}
\underbrace{H^3(\s{K},\Z)}_0 
$$
one infers  that $f $ is an isomorphism.
\end{proof}

\medskip

Recall that 
the inclusion $j:(\s{K}, \emptyset) \hookrightarrow (\s{K}, \s{K}^\tau)$ induces a natural homomorphism $j^*$ in the Borel equivariant cohomology.

\begin{lemma}\label{lemma_tool3}
The inclusion  $j$ induces the isomorphism
$$ 
H^2_{\Z_2}(\s{K}|\s{K}^\tau,\Z(1)) \;\overset{j^*}{\simeq}\; H^2_{\Z_2}(\s{K},\Z(1))\;.
$$
\end{lemma}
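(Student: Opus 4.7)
My plan is to derive Lemma \ref{lemma_tool3} from the long exact sequence \eqref{eq:Long_seq} of the pair $(\s{K},\s{K}^\tau)$ with coefficients in $\Z(1)$, combining it with Lemma \ref{lemma_tool2} and a direct computation of the Borel equivariant cohomology of the fixed locus $\s{K}^\tau$.

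First, I would determine $H^\bullet_{\Z_2}(\s{K}^\tau,\Z(1))$ in low degrees. By Lemma \ref{lemm:homot_typ}, $\s{K}^\tau$ is weakly contractible and carries the trivial $\Z_2$-action, so its homotopy quotient $\s{K}^\tau_{\sim\tau}$ is weakly equivalent to $B\Z_2\simeq \n{RP}^\infty$ and the local system $\Z(1)$ corresponds to the sign $\Z_2$-module $\widetilde{\Z}$. A standard group-cohomology calculation then gives $H^n_{\Z_2}(\s{K}^\tau,\Z(1))\simeq H^n(\Z_2,\widetilde{\Z})$, which equals $\Z_2$ in odd positive degrees and $0$ in even degrees (including $n=0$).

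Substituting these values together with those of Lemma \ref{lemma_tool2} into \eqref{eq:Long_seq} produces the segment
$$
\Z_2\;\overset{r}{\to}\;\Z_2\;\to\; H^2_{\Z_2}(\s{K}|\s{K}^\tau,\Z(1))\;\overset{j^*}{\to}\;\Z\;\to\; 0.
$$
Thus $j^*$ is automatically surjective, and it is injective as soon as the restriction $r:H^1_{\Z_2}(\s{K},\Z(1))\to H^1_{\Z_2}(\s{K}^\tau,\Z(1))$ is surjective. Since both sides are $\Z_2$, it is enough to verify that $r$ is nonzero.

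This last step is the main (and essentially only) nontrivial point of the proof. I would settle it by naturality of the Leray--Serre spectral sequence used in Lemma \ref{lemma_tool2}. The generator of $H^1_{\Z_2}(\s{K},\Z(1))\simeq\Z_2$ survives from $E_2^{1,0}=H^1(\Z_2,\widetilde{\Z})$, i.e.\ it is pulled back from the base $B\Z_2$ of the fibration $\s{K}\to \s{K}_{\sim\tau}\to B\Z_2$. Restriction along the inclusion of any fixed point $\ast\in\s{K}^\tau$ acts as the identity on the $(p,0)$-row of the $E_2$-page, so the composition $H^1_{\Z_2}(\s{K},\Z(1))\to H^1_{\Z_2}(\s{K}^\tau,\Z(1))\to H^1_{\Z_2}(\{\ast\},\Z(1))$ is the identity on $\Z_2$. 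Because $\{\ast\}\hookrightarrow \s{K}^\tau$ is a weak equivalence of $\Z_2$-spaces (both being weakly contractible with trivial action), the second arrow is an isomorphism, forcing $r$ itself to be an isomorphism. This closes the argument and yields the claimed isomorphism $j^*$.
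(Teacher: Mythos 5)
Your proof is correct and rests on the same two facts as the paper's: that $\s{K}^\tau$ is equivariantly weakly contractible to a fixed point $\{\ast\}$ with $H^2_{\Z_2}(\{\ast\},\Z(1))=0$, and that restriction to a fixed point is (split) surjective because the projection $\s{K}_{\sim\tau}\to B\Z_2$ retracts the inclusion $\{\ast\}_{\sim\tau}\hookrightarrow\s{K}_{\sim\tau}$. The paper packages this as the reduced-cohomology splitting \eqref{eq:app_red_cohom1} rather than running the long exact sequence explicitly, but the content is the same.
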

\begin{proof}
By  Lemma \ref{lemm:homot_typ}  $\s{K}^\tau \neq \emptyset$ is homotopy equivalent to a singleton $\{\ast\}$, on which evidently $\Z_2$ acts trivially. Therefore one has that
$$
H^{n}_{\Z_2}(\s{K}^\tau,\Z(1))\;\simeq\; H^{n}_{\Z_2}(\{\ast\},\Z(1))
$$
and
$$
H^{n}_{\Z_2}(\s{K}|\s{K}^\tau,\Z(1))\;\simeq\; \widetilde{H}^{n}_{\Z_2}(\s{K},\Z(1))
$$

\noindent
where on the right hand side there is the reduced cohomology induced by the inclusion $\{\ast\} \hookrightarrow \s{K}$. The Borel equivariant cohomology of the fixed point is well known \cite[Proposition 2.4.]{gomi-15} and in particular $H^{2}_{\Z_2}(\{\ast\},\Z(1))=0$. From \eqref{eq:app_red_cohom1} one gets
$$
H^{2}_{\Z_2}(\s{K},\Z(1))\;\simeq\; \widetilde{H}^{2}_{\Z_2}(\s{K},\Z(1))\;
\simeq\;H^{2}_{\Z_2}(\s{K}|\s{K}^\tau,\Z(1))
$$

\noindent
where the isomorphism is induced by the 
inclusion.
\end{proof}


\end{document}